\newcommand{\qed}{$\Box$}
\newenvironment{proof}{\noindent {\bf Proof:}}{\hfill \qed\\}
\newcommand{\Agt}{\mathit{Agt}}
\newcommand{\WMLOKP}{\mathbf{WMLOKP}} 
\newcommand{\WMLOP}{\mathbf{PMLO}} 
\newcommand{\commentout}[1]{}
\newcommand{\I}{\mathcal{I}}
\newcommand{\R}{\mathcal{R}}
\newcommand{\K}{\mathcal{K}}
\newcommand{\cS}{\mathcal{S}}
\newcommand{\Pspace}{\mathbf{Pr}}
\newcommand{\F}{\mathcal{F}}
\newcommand{\G}{\mathcal{G}}
\newcommand{\Nat}{\mathbb{N}}
\newcommand{\Rat}{\mathbb{Q}}
\newcommand{\Real}{\mathbb{R}}
\newcommand{\PLTLsK}{\mbox{$\mathbf{CTL^*KP}$}\xspace }
\newcommand{\PCTLK}{\mbox{$\mathbf{CTLPK}$}\xspace}
\newcommand{\CTL}{\mbox{$\mathbf{CTL}$}\xspace}
\newcommand{\infinitepaths}[1]{\mathtt{P}_\infty(#1)}
\newcommand{\finitepaths}[2]{\mathtt{P}_{#2}(#1)}
\newcommand{\pathsover}[1]{\uparrow #1}
\newcommand{\rimp}{\Rightarrow}
\newcommand{\dimp}{\Leftrightarrow}
\newcommand{\prob}{{\tt Pr}}
\newcommand{\prior}{\mathtt{Prior}}
\newcommand{\until}{U} 
\newcommand{\powerset}[1]{{\cal P}(#1)}
\newcommand{\Prop}{\mathit{Prop}}
\def\Prt[#1]{Pr^{(#1)}}
\newcommand{\kset}{\mathcal{K}}
\newcommand{\spr}{\mathtt{spr}}
\newcommand{\clk}{\mathtt{clk}}
\newcommand{\ve}[1]{{\mathbf #1}}
\newcommand{\beq}{\begin{equation}}
\newcommand{\eeq}{\end{equation}}
\newcommand{\be}{\begin{enumerate}}
\newcommand{\ee}{\end{enumerate}}
\newtheorem{theorem}{Theorem} 
\newtheorem{lemma}[theorem]{Lemma} 
\newtheorem{propn}[theorem]{Proposition}
\newcommand{\pr}{\operatorname{pr}}
\newcommand{\PAS}{Q}
\newcommand{\PAalph}{\Sigma} 
\newcommand{\PAinit}{\mathbf{v}_0} 
\newcommand{\PAtrans}{A} 
\newcommand{\PAfinal}{F}
\newcommand{\PFA}{{\cal A}}
\newcommand{\timeassgt}{\tau}
\begin{document}

\title{Undecidable Cases of Model Checking  Probabilistic Temporal-Epistemic Logic\thanks{Version of September 15, 2015. This is an extended version, with full proofs, of
a paper  that appeared in TARK 2015.
It corrects an error in the TARK 2015 pre-proceedings version, in the definition of mixed-time polynomial atomic probability formulas.} 
} 

\author{
Ron van der Meyden\\
Manas K Patra\\[5pt]
School of Computer Science and Engineering\\
UNSW Australia
}
\date{}

\maketitle

\begin{abstract} 
We investigate the decidability of model-checking logics of time, knowledge and probability, with respect to two epistemic semantics:
the clock and synchronous perfect recall semantics in partially observed discrete-time Markov chains. Decidability results are known for certain restricted logics with respect to these semantics, 
subject to a variety of restrictions that are either unexplained or  involve a longstanding unsolved mathematical problem. 
We show that mild generalizations of  the known decidable cases suffice to render the model checking
problem definitively undecidable. In particular, for a synchronous perfect recall, a generalization from temporal operators with finite reach 
to operators with infinite reach renders model checking undecidable. The case of the clock semantics is 
closely related to a monadic second order logic of time and probability that is known to be  decidable, except on a set of measure zero.
We show that two distinct extensions of this logic make model checking undecidable. One of these involves polynomial combinations of 
probability terms, the other involves monadic second order quantification into the scope of probability operators. These results explain
some of the restrictions in previous work. 
\end{abstract}

\section{Introduction} 

{\em Model checking} is a verification methodology used in computer science, 
in which we  ask whether a given model satisfies a given formula of some logic.  
First proposed in the 1980's \cite{CE81}, 
model checking is now a rich area, with a large body of associated theory and 
well developed implementations that automate the task of model checking. 
Significant use of model checking tools is made in industry, in particular, 
in the verification of computer hardware designs. 

Model checking developed originally in a setting where the 
specifications are expressed in a propositional temporal logic, and
the systems to be verified are  finite state automata. This setting has
the advantage of being decidable, and a great deal of work has gone into
the development of algorithms and heuristics for its efficient implementation. 
More recently,  the field has explored the extent to which the expressiveness 
of both the model representations and of the specification language can be 
extended while retaining decidability of model checking. 
Extensions in the systems dimensions considered include real-time systems \cite{AlurCD90},  systems
with a mixed continuous and discrete dynamic \cite{MalerNP08}, richer automaton models
such as push-down automata, machines with first-in-first out queues etc. 
In the dimension of the specification language, extensions considered 
include elements of second order logic and specific constructs
to capture the richer properties of the systems models described above
(e.g. in the real time case the specification language might
contain inequalities over time values.)  

Model checking for epistemic logic was first mooted in 
\cite{HV91},  and model checking for the combination 
of temporal and epistemic logic has been developed
both theoretically  \cite{MeydenShilov,EGM07,HM10} and in practice \cite{mck,MCMAS,Verics,DEMO}.  
A variety of semantics for knowledge are known to be associated with 
decidable model checking problems in finite state systems, in particular, the 
observational semantics (in which an agent reasons based on its present observation) 
the clock semantics (in which an agent reasons based on its present observation and the present clock value), 
and synchronous and asynchronous versions of perfect recall, all admit
decidable model checking in combination with quite rich temporal expressiveness \cite{MeydenShilov,EGM07,HM10}.

Orthogonally, a line of work on probabilistic model checking
has considered model checking of assertions about
probability and time \cite{prismbook}.  
Although one might at first
expect this line of work to be closely related to epistemic
model  checking, in that probability theory provides 
a model of uncertainty, in fact this area has been concerned not with 
how subjective probabilities change over time, but  with a probabilistic extension of temporal logic. 
The focus tends to be on the prior probability of some temporal property,
or on the probability that some temporal property holds in
runs from a current {\em known} state. 

Rather less attention has been given to model checking the combination 
of subjective probability and temporal expressiveness. 
Of the semantics for knowledge mentioned above, the 
clock and synchronous perfect recall semantics
are most suited as a basis for model checking subjective probability. 
(The others suffer from asynchrony, which 
makes it more difficult  to associate  a single natural 
probability space.) Implementations for these semantics
presently exist only for a limited set of formulas, in which the 
full power of temporal logic is not used. For example, results in \cite{HLM11} for model checking the logic of 
subjective probability (with clock or synchronous perfect recall semantics) 
and time restricts the temporal 
operators to have only finite reach into the future, 
and does not handle operators such as ``at all times in the future". 

A fundamental reason underlying this is that the problem of model checking probability with a rich temporal 
expressiveness  seems to be inherently complex. 
Indeed, it requires a solution to a basic mathematical problem, the {\em Skolem Problem}  
for linear recurrences, that has stood unsolved since first posed in the 1930's \cite{Skolem34}. 
Consequently, the strongest results on model checking
probability and time that encompass the expressiveness required 
for model checking knowledge and subjective probability  state
decidability in a way that requires exclusion of an infinite set of 
difficult instances for which decidability is unresolved. 
Specifically, \cite{BRS06} shows that a (weak) monadic second order logic $\WMLOP$, containing
probability assertions of forms such as $\prob(\phi(t_1, \ldots, t_n)) > c$, in which the
$t_i$ take values in the natural numbers, representing discrete time points, 
is decidable in finite state Markov chains, provided that the rational number $c$ is not in 
a set $H_\phi$ depending on $\phi$ which can taken to be of arbitrarily small non-zero measure. This work leaves open the decidability of the model checking 
problem for the language in its full generality, in particular, for the  values of $c$ in $H_\phi$.  

Our contribution  in this paper is to consider a number of generalizations of $\WMLOP$, motivated by model checking a logic of time and subjective probability. 
In particular, our generalizations arise very naturally when attempting to deal with the way that an agent 
conditions probability on its observations.  We show that these generalizations definitively result in 
undecidable model checking problems.  This clarifies the boundary between the decidable and undecidable cases of 
model checking  logics of probability and time. 

We begin in section~\ref{sec:lang} by recalling the definition of {\em probabilistic interpreted systems} \cite{HalpernUncertainty}, 
which provides a very general semantic framework for logics of time, knowledge and probability. We work with an instantiation of
this general framework in which systems are generated from finite state partially observed discrete-time Markov chains. 
We define two logics that take semantics in this framework. 
The first is an extension of the branching time temporal logic $\CTL^*$
to include operators for knowledge and probability, including operators for the subjective probability of agents. 
The second is a more expressive monadic second order  logic that  also adds a capability to quantify over moments of time
and \emph{finite sets} of moments of time. In this logic, the agent knowledge and probability operators
are indexed by a temporal variable. This logic generalizes the logic of \cite{BRS06}. 
Our logics allow polynomial comparisons of probability terms, as well as comparisons of agent probability terms referring to multiple time points. 
We argue from a number of motivating applications that this level of expressiveness is
useful in potential applications.  
We show in Section~\ref{sec:relations} that the monadic second order logic is as least as expressive as our probabilistic extension of $\CTL^*$. 
Indeed, some apparently mild extensions of $\WMLOP$ suffice for the encoding:  
the epistemic  and subjective probability operators can be eliminated using a universal modality,  
polynomial combinations of probability expressions, and a more liberal use of 
quantification than allowed in $\WMLOP$. 

We then turn in Section~\ref{sec:results} to an investigation of the model checking problem. 
Specifically, we show that model checking even very simple formulas about a single agent's probability 
is undecidable when the agent has perfect recall. A consequence of this result is that 
an extension of $\WMLOP$  that adds second order quantification into the scope 
of probability is undecidable. 

This suggests a focus on weaker epistemic semantics instead, in particular, the 
clock semantics. From the point of view of $\WMLOP$, to express
agent's subjective probabilities with respect to the clock semantics requires polynomial combinations of 
simple global probability terms of the form `` the probability that proposition $p$ holds at time $t$". 
We formulate a simple class of formulas involving such  polynomial combinations, 
and show that this  also has undecidable model checking.  

These results show that even simple model checking questions about 
subjective probability are undecidable, and moreover help to 
explain some unexplained restrictions on $\WMLOP$ in \cite{BRS06}: 
these restrictions are in fact necessary in order to obtain a decidable logic.  
We conclude with a discussion of future work
in Section~\ref{sec:concl}. Related work most closely related to our results is discussed in the 
context of presenting and motivating the results.

\section{Probabilistic Knowledge}\label{sec:lang} 

We describe in this section the semantic  setting 
for the model checking problem we consider. 
We model a set of agents making partial observations of an environment 
that evolves with time. We first present the semantics of the modal logic we consider, following \cite{HalpernUncertainty}, 
using the general notion of probabilistic interpreted system. 
Since these structures are not finite, in order to have a finite input for a model checking problem, 
we derive a probabilistic interpreted system from a partially observed discrete-time Markov chain. 
This is done in two ways, depending on the degree of recall of the agents. Taking the 
Markov chain to be finite, we obtain finitely presented model checking problems whose complexity we
then study. 

\subsection{Probabilistic Interpreted Systems} 

Probabilistic interpreted systems are defined as follows. 
Let  $Agt=\{1,\ldots,n\}$ be a set of agents operating in an environment $e$. 
At each moment of  time, each  agent is assumed to be in some {\it local} state, which records all the information that the agent can access at that time. 
The environment $e$ records ``everything else that is relevant". 
Let $S$ be the set of environment states and let $L_i$ be the set of local states of agent $i\in Agt$. 
A {\it global} state of a multi-agent system is an $(n+1)$-tuple $s=(s_e,s_1,\ldots,s_n)$ such that $s_e\in S$ and $s_i\in L_i$ for all $i\in Agt$.
We write $\G = S \times L_1 \times \ldots \times L_n$ for the set of global states. 

Time is represented discretely using the natural numbers $\mathbb{N}$. 
A {\em run}  is a function $r:\mathbb{N}\rightarrow \G$, specifying a 
global state at each moment of time. A pair $(r,m)$ consisting of a run $r$ and time $m\in \Nat$ is called a {\em point}. If $r(m)=(s_e,s_1,\ldots,s_n)$ then 
we define $r_e(m)=s_e$ and $r_i(m)=s_i$ for $i\in Agt$.  
If $r$ is a run and $m\in \Nat$ a time, we write $r[0..m]$ for $r(0) \ldots r(m)$ and $r_e[0..m]$ for $r_e(0) \ldots r_e(m)$. 
A {\em system} is a set $\R$ of runs. 
We call $\R\times \Nat$  the {\em set of points} of the system $\R$. 

Agent knowledge is captured using a relation of indistinguishability. 
Two points $(r,m)$ and $(r',m')$ are said to be \emph{indistinguishable to agent $i$}, 
if the agent is in the same local state at these points. 
Formally, we define $\sim_i$ to be the equivalence relation on $\R\times \Nat$ given by $(r,m) \sim_i (r',m')$, if $r_i(m) = r'_i(m')$. 
Relative to a system $\R$, we define 
the set  $$\kset_i(r,m)=\{(r',m')\in\R\times \mathbb{N}~|~(r',m')\sim_i(r,m)\}$$ to be the set of points that are,  for agent $i$,  indistinguishable from  the 
point $(r,m)$. 
Intuitively, $\kset_i(r,m)$ is the set of all points that the agent considers possible when it is in the actual situation $(r,m)$. 
A system is said to be \emph{synchronous} if for all agents $i$, we have that $(r',m') \in \kset_i(r,m)$ implies that $m=m'$. 
Intuitively, in a synchronous system, agents always know the time. Since it is more
difficult to define probabilistic  knowledge in systems that are not synchronous, we confine our attention to synchronous systems
in what follows.

A {\em probability space} is a triple $\Pspace=(W,\F,\mu)$ such that $W$ is a (nonempty) set, 
called the {\em carrier}, $\F\subseteq \mathcal{P}(W)$ is a $\sigma$-field of subsets of $W$, called the {\em measurable} sets in $\Pspace$,
containing $W$ and 
closed under complementation and countable union, 
and $\mu:\F\rightarrow [0,1]$ is a {\em probability measure},
such that $\mu(W)=1$ and $\mu(\bigcup_n V_n)=\sum_n \mu(V_n)$ for every countable sequence $\{V_n\}$ of mutually disjoint measurable sets $V_n \in \F$.  
As usual, we define the conditional probability $\mu(U|V) = \mu(U\cap V)/\mu(V)$ when $\mu(V) >0$. 

Let $\Prop$ be a set of \emph{atomic propositions}. 
A {\em probabilistic interpreted system} over $\Prop$ is a tuple $\I=(\R, \prob_1, \ldots , \prob_n, \pi)$ 
such that $\R$ is a system, each $\prob_i$ is a function mapping each point $(r,m)$ of 
$\R$ to a probability space  $\prob_i(r,m)$ in which the carrier is a subset of $\R\times \Nat$, 
and $\pi:\R\times \mathbb{N}\rightarrow \powerset{\Prop}$ is an interpretation of some set $\Prop$ of 
atomic propositions. Intuitively, the probability space $\prob_i(r,m)$ captures the way that the agent $i$ assigns
probabilities at the point $(r,m)$, and $\pi(r,m)$ is the set of atomic propositions that are true at the point. 

We will work with probabilistic interpreted systems derived from 
synchronous systems in which agents have a common prior on the set of runs.
To define these, we use the following notation. 
For a system $\R$, a set of runs $\cS \subseteq \R$ and a set of points $U\subseteq \R\times \Nat$, define 
$$\cS(U)=\{r\in\cS~|~\exists m:(r,m)\in U\}$$ to be the set of runs in $\cS$ 
passing through 
some point in
the set $U$.  Conversely, for a set $\cS$ of runs and a set $U$ of points, 
define $$U(\cS) = \{(r,m)\in U~|~r\in \cS\}$$ to be the set of points in $U$ 
that are on a run in $\cS$. 
Note that if there exists a constant $k\in \Nat$ such that $(r,m) \in U$ implies $m=k$, 
then the relation $r \leftrightarrow (r,k)$ defines a  one-to-one correspondence between $\cS(U)$ and $U(\cS)$. 
In synchronous systems, 
in which the sets $\kset_i(r,m)$ satisfy this condition, 
this gives a way to move between sets of 
points considered possible by an agent and corresponding sets of runs. 

Suppose that $\R$ is a synchronous system,  let $\Pspace = (\R,\F,\mu)$ be a probability space on the system $\R$, 
and let $\pi$ be an interpretation on $\R$.  Intuitively, the probability space $\Pspace$ represents a 
prior distribution over the runs. We assume that for all points $(r,m)\in \R\times \Nat$ and agents $i$, 
we have that $\R(\kset_i(r,m))\in \F$ is a measurable set and $\mu(\R(\kset_i(r,m))) >0$.   
(This assumption can be understood as saying that, according to the prior, each 
possible local state $r_i(m)$ of agent $i$ at time $m$ has non-zero probability of being the local 
state of agent $i$ at time  $m$.)  
Under this condition, we define the probabilistic interpreted system 
$\I(\R,\Pspace,\pi) = (\R,\prob_1,\ldots,\prob_n,\pi)$ such that $\prob_i$ associates with each point $(r,m)$ the probability space
 $\prob_i(r,m)=(\kset_i(r,m),\F_{r,m,i},\mu_{r,m,i})$
 defined by $$\F_{r,m,i} = \{\kset_i(r,m)(\cS)~|~\cS \in \F\}$$ and  
 such that $$\mu_{r,m,i}(U)=\mu(\, \R(U)~|~\R(\kset_i(r,m))\, )$$ for all $U \in \F_{r,m,i}$. 
Intuitively, because the set of runs $\R(\kset_i(r,m))$ is measurable, we can obtain a probability space with carrier
$\R(\kset_i(r,m))$ by conditioning in $\Pspace$. Because of the synchrony assumption there is, 
for each point $(r,m)$, a one-to-one correspondence 
between points in $\kset_i(r,m)$ and runs in $\R(\kset_i(r,m))$. The construction 
uses this correspondence to induce a probability space on $\kset_i(r,m)$ from the probability space on 
$\R(\kset_i(r,m))$.
We remark that under the additional assumption of perfect recall, it is 
also possible to understand each space $\prob_i(r,m+1)$ as 
obtained by conditioning on the space $\prob_i(r,m)$. 
See \cite{HalpernUncertainty} for a detailed explanation of this point.

\subsection{Probabilistic Temporal Epistemic Logic} 

To specify properties of probabilistic interpreted systems, a variety of logics can be formulated, 
drawing from the spectrum of temporal logics.  Our main interest is in a reasoning about subjective probability and
time, so we first consider a natural way to combine existing temporal and probabilistic 
logics. For purposes of comparison, it is also helpful to consider a rather richer monadic second 
order logic of probability and time, that is closely related to a logic for which some decidability results are known. 

We may combine temporal and probabilistic logics to define a logic \PLTLsK that extends the temporal logic $\mathbf{CTL^*}$ 
by adding operators for knowledge and probability.
Its syntax is given by the grammar
\[
\begin{array}{l} 
\phi~::=~p~|~\neg\phi~|~\phi\wedge \phi~|~A\phi~|~X\phi~|~\phi \until \phi~|~K_i\phi~|~
f(P, \ldots,P) \bowtie c\\[8pt] 
P ::= \prob_i(\phi) ~|~\prior_i(\phi) 
\end{array} 
\]
where $p\in \Prop$, 
$c$ is a rational constant,  
$\bowtie$ is a relation symbol in the set $\{\leq,<,=,>,\geq\}$, 
and $f(x_1, \ldots, x_k)$ is multivariate polynomial in $k$ variables 
$x_1 , \ldots x_k$ with rational coefficients. 
Instances of $P$ are called \emph{basic probability expression}. 
The instances generated from  $f(P, \ldots,P)$ are 
called {\em probability expressions}, and are expressions of the form 
 $f(P_1, \ldots , P_k)$, obtained by substituting a basic probability expression $P_i$ for each variable $x_i$ in $f(x_1, \ldots, x_k)$.  
For example, 
$$4  \prob_1(p)^5\cdot \prob_2(q)^3 + \frac{7}{15}\prob_1(p) $$ 
is an instance of $f(P, \ldots,P)$ 
obtained from 
$f(x,y) = 4x^5y^3 + \frac{7}{15}x$ by substituting $\prob_1(p)$ for $x$ and $\prob_2(q)$ for $y$.

 Intuitively, formula $K_i\phi$ expresses that agent $i$ knows $\phi$. 
The formula $A \phi$ says that $\phi$ holds for all possible system evolutions from the current situation. 
The formula $X\phi$ expresses that $\phi$ holds at the next moment of time. 
The formula $\phi_1 \until \phi_2$ says that $\phi_2$ eventually holds, and  $\phi_1$ holds until that time. The expression 
 $\prob_i(\phi) $ represents agent $i$'s current probability of $\phi$, 
  $\prior_i(\phi) $ represents agent $i$'s {\em prior} probability of $\phi$, i.e., the agent's probability of 
  $\phi$ at time 0. 
 The formula $f(P_1, \ldots,P_k) \bowtie c$ expresses that this polynomial combination of 
 current and prior probabilities stands in the relation $\bowtie$ to $c$. 
 We use standard abbreviations from temporal logic, in particular, 
 we write $F\phi$ for $\mathit{true} \until \phi$.
 
 A restricted fragment of the language that may be of interest 
 is the {\em branching time fragment}  in which the temporal operators 
 are restricted to those of  the temporal logic $\mathbf{CTL}$. 
 That is,  $X$ and $\until$ are permitted to 
 occur only in combination with the operator $A$, in one of the forms 
 $AX\phi$, $EX\phi$,  $A\phi_1 \until \phi_2$, $E\phi_1\until\phi_2$, 
 where we write $E\phi$ as an abbreviation for $\neg A \neg \phi$. 
 We call this fragment of the language \PCTLK. 
 The motivation for considering this fragment is that 
 the complexity of model checking   is in polynomial time
 for  the temporal logic  $\CTL$, whereas it is 
 polynomial-space complete for the richer temporal logic $\CTL^*$ \cite{CES86}. 
The logic \PCTLK is therefore, {\em prima facie}, a candidate 
for lower complexity once knowledge and probability operators
are added to the logic.

 The semantics of the language \PLTLsK in a probabilistic interpreted system 
 $\I = \I(\R, \Pspace, \pi)$ is given by interpreting formulas $\phi$ at points $(r,m)$ of $\I$, 
 using  a satisfaction relation $\I,(r,m)\models \phi$. 
The definition is mutually recursive with a function $[\cdot]_{\I,(r,m)}$
that assigns a value $[P]_{\I,(r,m)}$  to each probability expression $P$ at each point $(r,m)$. 
This requires computing the measure of certain sets. 
For the moment, we assume that all sets arising in the definition are measurable. 
We show later that this assumption holds in the cases of interest in this paper. 

 We first interpret the probability expressions at points $(r,m)$ of the system $\I$, by 
 
\begin{subequations}
\begin{equation*}\label{eq:ProbAssignment_agent}
[\prob_i\phi]_{\I,(r,m)}   =  \mu_{r,m,i}(\{(r',m')\in \kset_i(r,m)~|~\I,(r',m')\models\phi\})
\end{equation*}
\begin{equation*}
 ~[\prior_i\phi]_{\I,(r,m)}   =   \mu_{r,0,i}(\{(r',0)\in \kset_i(r,0)~|~ \I,(r',0)\models\phi\})
\end{equation*}
\begin{equation*}
 ~[f(P_1, \ldots,P_k)]_{\I,(r,m)}   =   f([P_1]_{\I,(r,m)} , \ldots,[P_k]_{\I,(r,m)} )
\end{equation*}

\end{subequations}

 The satisfaction relation is then defined recursively, as follows: 
\begin{enumerate}
\item $\I, (r,m)\models p$ if $p\in\pi(r,m)$
\item $\I, (r,m)\models \neg\phi$ iff not $\I,(r,m)\models \phi$
\item $\I, (r,m)\models \phi_1\wedge\phi_2$ iff $\I, (r,m)\models \phi_1$ and $\I, (r,m)\models \phi_2$
\item $\I, (r,m)\models A\phi$ if $\I,(r',m)\models \phi$ for all runs $r'$ with $r'[0\ldots m] = r[0\ldots m]$, 
\item $\I, (r,m)\models X\phi$ if $\I,(r,m+1)\models \phi$
\item $\I, (r,m)\models \phi_1 \until \phi_2$ holds if there exists $k\geq m$ such that 
$\I,(r,k)\models \phi_2$, and $\I,(r,l)\models \phi_1$ for all $l$ with $m\leq l<k$. 
\item $\I, (r,m)\models K_i\phi$ if $\I, (r',m')\models \phi$ for all $(r',m')\in \kset_i(r,m)$.
\item $\I, (r,m)\models f(P_1, ...,P_k) \bowtie c$ if $[f(P_1, ...,P_k)]_{\I,(r,m)}  \bowtie c$. 
\end{enumerate}

\subsection{Probabilistic Monadic Second Order Logic} 

\newcommand{\botag}{\bot} 
\newcommand{\topag}{\top} 

Temporal modal logics refer to time in a somewhat implicit way. An alternative
approach is to work in a setting with more explicit references to time, 
by using variables denoting time points.  Kamp's theorem \cite{kamp} establishes an equivalence in the first order case, but
by adding second order variables and quantification, one
can obtain richer logics, that frequently remain decidable in the 
monadic case.  In this section, we develop a logic in this style for
time and subjective probability. 

We define the logic $\WMLOKP$ as follows. 
We use two types of variables: time variables $t$ and set variables $X$. 
Time variables take values in $\Nat$ and set variables take \emph{finite} subsets of $\Nat $ as values. 
{\em Probability terms}  $P$ have  the form  $\prob(\phi)$ or the form 
$ \prob_{i,t}(\phi)$ where $i \in \Agt$ is an agent, $t$ is a time variable, $\phi$ is a formula. 
Formulas $\phi$ are defined by the following grammar: 
$$ \phi ::= \begin{array}[t]{l} 
p(t) ~ |~X(t) ~|~ t_1 < t_2~|~f(P, \ldots, P) \bowtie c   ~|~ \neg \phi~|~\phi \land \phi~|~ \\
K_{i,t}( \phi) ~|~\forall t(\phi) ~|~ \forall X(\phi)
\end{array} 
$$ 
where $t, t_1, t_2$ are time variables, $p$ is an atomic  proposition, $X$ is a set variable,  $i$ is an agent, $c \in \Rat$ is  a rational constant, 
$f$  is a rational polynomial (see the discussion above for $\PLTLsK$), 
and $\bowtie$ is a relation symbol from the set $\{ =, < ,\leq, > ,\geq\}$. 

Intuitively, in this logic formulas are interpreted relative to a run. 
Instead of indexing by a single moment of time, as in the logic above, we 
relativize the satisfaction relation to an assignment of values to the temporal and set variables. 
Atomic formula  $p(t)$ says that proposition  $p$ holds at time $t$.
Similarly, a (finite) set $X$ of times can be interpreted as a proposition, and
we can understand  $X(t)$ as stating that the value of $t$ is in $X$. 
(We remark that there is a fundamental difference between the types of 
propositions denoted by  atomic propositions $p$ and set variables $X$: whereas the 
atomic propositions may depend on structural aspects of the run, such as
the global state at time $t$, the set variables may refer only to the time.)  
The atomic formula $t_1 < t_2$ has the obvious interpretation that time $t_1$ is less than time $t_2$. 
The constructs
$\forall t(\phi)$  and $\forall X(\phi)$ correspond to universal quantification over
times and \emph{finite}
sets of times respectively. They say that $\phi$ holds on the current run for all values of the 
variable. (Taking finite sets amounts to the \emph{weak} interpretation of second order quantification. 
One could also consider a strong semantics allowing infinite sets of times. We have opted
here for the weak interpretation to more easily relate our results to the existing literature.) 

The probability term $\prob(\phi)$ refers to the probability of $\phi$ in the probability space on runs. 
The meaning of probability term $ \prob_{i,t}(\phi)$ is agent $i$'s probability at time $t$ that the run satisfies $\phi$. 
Similarly, $K_{i,t}\phi$ says that agent $i$ knows at time $t$ that the run satisfies $\phi$. 
Note that, whereas in $\PLTLsK$, the formula $K_i \phi$ always expresses that agent $i$ knows
that $\phi$ holds at the ``current time", in  $\WMLOKP$, formulas such as 
$$ \exists u (u< t \land K_{i,t}( p(u)))$$ 
talk about the agent's knowledge, at some time $t$, about
what was true at some earlier time $u$. A similar point applies to
probability expressions.

Accordingly, for the semantics of $\WMLOKP$, we use a variant of interpreted
systems in the form $\I = (\R, \Pr, \pi)$, where $\R$ is a system, i.e., a set of runs, 
and $\pi$ is an interpretation, as above, but where 
$\Pr = (\R, \F, \mu)$ is a probability space with carrier equal to the set of runs 
$\R$, rather than a mapping associating a probability space over a set of points
with  each  agent at each point. 

When dealing with formulas with free time and set variables, we need the extra notion of an assignment for the 
time and set variables. This is a function $\timeassgt$ such that for each free time variable $t$ we have $\timeassgt(t) \in \Nat$, 
and for each free set variable $X$ 
we have that $\timeassgt(X)$ is a finite subset of  $\Nat$.
Given such an assignment, we give the semantics of probability terms and 
formulas by a mutual recursion. We give the semantics of formulas $\phi$
by means of a relation $\I, \timeassgt, r \models \phi$ defined as follows: 
\begin{enumerate}
\item $\I, \timeassgt, r\models p(t)$ if $p\in\pi(r,\timeassgt(t))$, when $p $ is an atomic proposition, 
\item $\I, \timeassgt, r\models X(t)$ iff $\timeassgt(t) \in \timeassgt(X)$, if $X$ is a set variable, 
\item  $\I, \timeassgt, r\models t_1<t_2 $ iff $\timeassgt(t_1) < \timeassgt(t_2)$, 
\item $\I, \timeassgt, r\models \neg \phi$ iff not  $\I, \timeassgt, r\models \phi$, 
\item $\I, \timeassgt, r\models \phi_1\wedge\phi_2$ iff $\I, \timeassgt, r\models \phi_1$ and $\I, \timeassgt, r\models \phi_2$,
\item $\I, \timeassgt, r\models K_{i,t}(\phi)$ if $\I, \timeassgt, r'\models \phi$ for all $(r',m')\in \kset_i(r,\timeassgt(t))$, 

\item $\I, \timeassgt, r\models f(P_1, ...,P_k) \bowtie c$ if $[f(P_1, ...,P_k)]_{\I,\timeassgt,r} \bowtie c$, 
\item $\I, \timeassgt, r\models \forall t (\phi)$ if $\I, \timeassgt[t\mapsto n], r\models \phi$ for all $n \in \Nat$,  
\item $\I, \timeassgt, r\models \forall X (\phi)$ if $\I, \timeassgt[X\mapsto U], r\models \phi$ for all finite $U \subseteq \Nat$.  
\end{enumerate}
In item (7), 
the definition is mutually recursive with the semantics of 
 probability  terms, which are interpreted as real numbers, relative to a temporal assignment. 
We define 
$$ [\prob(\phi) ]_{\I,\tau,r} = \mu( \{ r' \in \R ~|~  \I,\timeassgt,r' \models \phi\})$$  and 
$$ [\prob_{i,t}(\phi) ]_{\I,\tau,r} = \frac{\mu( \{ r'\in \R ~|~ (r, \timeassgt(t)) \sim_i (r', \timeassgt(t)),~ \I,\timeassgt,r' \models \phi\})}{\mu( \{ r'\in \R ~|~ (r, \timeassgt(t)) \sim_i (r', \timeassgt(t)) \})} $$ 
$$ [f(P_1, \ldots, P_k)]_{\I, \tau,r} = f([P_1]_{\I, \tau,r}, \ldots, [P_k]_{\I, \tau,r}) $$ 
As above, we assume measurability of the sets required, and 
also that the agent probability expressions do not involve a division by zero. We
later justify that this holds in the 
particular setting of interest in this paper. 

A particular class of formulas of $\WMLOKP$ will be of interest below. 
Define a {\em mixed-time polynomial atomic probability formula} 
to be a formula of the form%
\footnote{The TARK 2015 pre-proceedings version of this paper incorrectly had a universal 
quantifier in this definition. The existential form is needed for the correctness of Theorem~\ref{thm:mixedtime}.}
$$ \exists t_1 \ldots t_n ( f(\prob(\phi_1),  \ldots , \prob(\phi_m)) =0 )$$
where $f(x_1, \ldots, x_m)$ is a rational polynomial and each $\phi_i$ is an atomic  formula of the 
form $p(t_j)$ for some proposition $p$ and $j\in \{1\ldots n\}$.  
We motivate the usefulness of such temporal mixing of probability expressions in Section~\ref{sec:discussion}.  

The logic $\WMLOKP$ generalizes several logics from the literature. 
If we restrict the language by excluding the probability comparison atoms 
$f(P_1, \ldots, P_k) \bowtie c $ 
and knowledge formulas  $K_{i,t}( \phi)$, we have the \emph{Weak Monadic Logic of Order}, 
which is equivalent to WS1S \cite{Buchi60}. 
We obtain the \emph{Probabilistic Monadic Logic of Order} considered in \cite{BRS06}, which we denote here by $\WMLOP$, 
if we 
\begin{itemize} 
\item exclude the knowledge operators $K_{i,t}$, 
\item exclude agent's probability terms $\prob_{i,t}(\phi)$, and 
\item limit the global probability  comparisons to be of the form $\prob(\phi(t_1, \ldots, t_k)) \bowtie c $, containing
just a single probability term $\prob(\phi(t_1, \ldots, t_k))$, with the further constraint that the only free variables of $\phi$ should be 
temporal variables $t_1, \ldots t_k$. 
\end{itemize}
In particular, second-order quantification into probability expressions, e.g.,  $\forall X[ \prob(X(t)) >c ]$ is not 
permitted in $\WMLOP$,  but second order quantification that does not cross a probability operator, such as  $ \prob(\forall X[X(t))]) >c$, is allowed.
We note that $\WMLOP$ \emph{does} allow first order quantifications into the scope of probability, 
such as $\forall t[ \prob(p(t)) >c ]$. 

In the sequel, we refer to quantification into the scope of a knowledge formula or probability expression as \emph{quantifying-in}. 

\subsection{Partially Observed Markov Chains} 

Although they provide a coherent semantic framework, probabilistic interpreted systems are infinite structures, and therefore 
not suitable as input for a model checking algorithm. We therefore work with  a type of finite model called an {\em interpreted
partially observed discrete-time Markov chain}, or PO-DTMC for short. A finite PO-DTMC for $n$ agents 
is a tuple  $M=(S,PI,PT,O_1,...,O_n,\pi)$, where $S$ is a finite set of states, $PI:S\rightarrow[0..1]$ is a function such that $\sum_{s\in S}PI(s)=1$, component $PT:S\times S\rightarrow [0,1]$ is a function 
such that $\sum_{s'\in S}PT(s,s')=1$ for all $s\in S$,  and for each agent $i\in Agt$, we have a function $O_i:S\rightarrow \mathcal{O}$ for some set $\mathcal{O}$. 
Finally, $\pi:S\rightarrow \mathcal{P}(\Prop)$ is an interpretation of the atomic propositions $\Prop$  at the states.  

Intuitively, $PI(s)$ is the probability that an execution of the system starts at state $s$, and $PT(s,t)$ is
the probability that the state of the system at the next moment of time will be $t$, given that it is 
currently $s$.  The value $O_i(s)$ is the observation that agent $i$ makes when the system is in state $s$. 
(Below, in the context of interpreted systems, we treat the set of states $S$ as the states of the environment 
rather than as the set of global states. Agents' local  states will be derived from the observations.)  

Note that the first three components $(S,PI,PT)$ of a PO-DTMC form a standard discrete-time Markov chain. 
This gives rise to a probability space on runs in the usual way. 
A \emph{path}  in $M$ is a finite or infinite sequence $\rho = s_0 s_1 \ldots$ 
such that $PI(s_0) \neq 0$ and $PT(s_k, s_{k+1})>0 $ for all $k$ with $0 \leq k <|\rho|-1$. 
We write $\infinitepaths{M}$ for the set of all infinite paths of $M$, 
and, for $m \in \Nat$, write  $\finitepaths{M}{m}$ for the set of all finite paths $s_0 s_1 \ldots s_m$ with exactly $m$ transitions. 
Any finite path $\rho= s_0 s_1 \ldots s_m$  defines a set 
\beq\label{eq:basicCylinder}
\infinitepaths{M}\pathsover{\rho} = \{ \omega \in \infinitepaths{M}~|~ \omega[0\ldots m] = \rho \}
\eeq 
That is, $\infinitepaths{M}\pathsover{\rho}$ consists of all infinite paths which have $\rho$ as a prefix.

We now define a probability space $\Pspace(M) = (\infinitepaths{M}, \F, \mu )$ 
over the set $\infinitepaths{M}$ of all infinite paths of $M$. The $\sigma$-algebra $\F$ 
is defined to be the smallest $\sigma$-algebra over $\infinitepaths{M}$
that contains as basic sets all the sets $\infinitepaths{M} \pathsover{\rho}$ for  $\rho= s_0 s_1 \ldots s_m$ a finite path of $M$. 
For these basic sets, the function $\mu$ is defined by 
$$\mu(\infinitepaths{M}\pathsover{\rho}) = PI(s_0) \cdot PT(s_0,s_1) \cdot \ldots \cdot PT(s_{m-1},s_m)~.$$
The fact that $\mu$ can be extended to a measure on $\F$ 
is a non-trivial result of Kolmogorov for more general stochastic processes \cite{KemenySnell1}.

We may construct several different probabilistic interpreted systems from each PO-DTMC, 
depending on what agents remember of their observations.  
We consider two, one that assumes that agents have  perfect recall of all their observations, denoted $\spr$, 
and the other, denoted $\clk$, which assumes that agents are aware of the current time and their current observation. 
Recall that runs in an interpreted system map time to global states, 
consisting of a state of the environment and a local state for each agent. We interpret the states of the PO-DTMC $M$ as
states of the environment. To obtain a run, we also need to specify a local state for each agent
at each moment of time. We use the the observations to construct the local states. 

In the case of the {\em synchronous perfect recall semantics},  
given a path $\rho \in \infinitepaths M$, we obtain a run $\rho^{\spr}$ by defining the components at each time $m$ as follows. 
The environment state at time $m$ is $\rho_e^{\spr}(m)=\rho(m)$,  and the local state of agent $i$ at time $m$ 
is $\rho_i^{\spr}(m)=O_i(\rho(0))\ldots O_i(\rho(m))$. Intuitively, 
this local state assignment represents that the agent remembers all its past observations. 
We write $\R^\spr(M)$ for the set of runs of the form $\rho^{\spr}$ for $\rho\in \infinitepaths M$.
Note that this system is synchronous: if $r = \rho^\spr$ and $r' = \omega^\spr$ then for each agent $i$ 
and time $m \in \Nat$, if 
$r_i(m) = r'_i(m')$, then $O_i(\rho(0))\ldots O_i(\rho(m)) = O_i(\omega(0))\ldots O_i(\omega(m'))$, 
which implies $m = m'$. 

For the {\em clock semantics}, we  construct a run a  $\rho^{\clk}$ in which  again the environment state at time $m$ is $\rho_e^{\clk}(m)=\rho(m)$, and 
for agent $i$ we define the local state at time $m$ by $\rho^{\clk}(m) = (m,O_i(\rho(m))$. 
Intuitively, this says that the agent is aware of the clock value and its current observation. 
We write $\R^\clk(M)$ for the set of runs of the form $\rho^{\clk}$ for $\rho\in \infinitepaths M$ an infinite path of $M$. 
This system is also synchronous: if $r = \rho^\clk$ and $r' = \omega^\clk$ then for each agent $i$ 
and time $m \in \Nat$, if $r_i(m) = r'_i(m')$, then $(m, O_i(\rho(m))) = (m', O_i(\omega(m')))$, hence $m=m'$. 
In both cases of $x \in \{\spr,\clk\}$, if $T$ is a subset of $\infinitepaths M$, we 
write $T^x$ for $\{\rho^x~|~\rho\in T\}$. 

In both cases of $x \in \{\spr,\clk\}$, we have a one-to-one correspondence between the infinite paths
$\infinitepaths{M}$ and the runs $\R^x(M)$. We therefore 
can induce probability spaces $\Pspace^x(M)$ on $\R^x(M)$ from the probability space 
$\Pspace(M)$ on $\infinitepaths M$.   
As described above, the probability space $\Pspace^x(M)$ on runs moreover induces
a probability space $\prob^x_i(r,m)$ on the set of points considered possible by
each agent $i$ at each point $(r,m)$. 
 The PO-DTMC $M$ gives us an interpretation $\pi$ on its states, and we may derive from  this 
an interpretation $\pi^x$ on the points $(r,m)$ of  $\R^\spr(M)$ and  $\R^\clk(M)$ by defining $\pi^x(r,m) = \pi(r_e(m))$. 
Using the general construction defined above, we then obtain the probabilistic interpreted systems $\I^x(M) = \I(\R^x(M),\Pspace^x(M), \pi^x)$
for $x \in \{\spr,\clk\}$. 

It is necessary to establish the measurability of certain sets for the semantic definitions of the logics above to be complete. 
We now establish this for the systems $\I^\spr(M)$ and $\I^\clk(M)$. 

First, the general construction of $\I(\R,\Pspace, \pi)$, where $\Pspace = (\R, \F,\mu)$,  assumed that for all $m\in \Nat$ and all runs $r$, 
we have that $\R(\kset_i(r,m))\in \F$ is measurable and $\mu(\R(\kset_i(r,m)))>0$. 
The following lemma assures us that this is the case for both the perfect recall and the clock constructions. 
We write $\kset_i^x(r,m)$ for $\kset_i(r,m)$, relative to the set of runs $\R = \R^x(M)$. 

\begin{lemma}
Let $M$ be a finite PO-DTMC.  
Then for each $x\in \{\spr, \clk\}$, and run $r\in \R^x(M)$, 
the set $\R^x(M)(\kset_i^x(r,m))$ is measurable in $\Pspace^x(M)$ and has measure $>0$. 
\end{lemma}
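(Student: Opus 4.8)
The plan is to unwind the definitions of $\kset_i^x(r,m)$ in each of the two cases $x\in\{\spr,\clk\}$, show that the corresponding set of runs is a finite (hence measurable) union of basic cylinder sets $\infinitepaths{M}\pathsover{\rho}$, and then observe that each such cylinder has strictly positive measure because the finite paths of $M$ are exactly the sequences along which $PI$ and $PT$ are nonzero. First I would fix an agent $i$, a time $m$, and a run $r\in\R^x(M)$, and recall that $r=\rho^x$ for a unique $\rho\in\infinitepaths M$. By the definition of the one-to-one correspondence between $\infinitepaths M$ and $\R^x(M)$, the set $\R^x(M)(\kset_i^x(r,m))$ corresponds, under $\rho'\leftrightarrow\rho'^x$, to the set of infinite paths $\omega$ such that $\omega^x_i(m)=\rho^x_i(m)$.

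Next I would compute this set explicitly in each case. For the clock semantics, $\rho^\clk_i(m)=(m,O_i(\rho(m)))$, so the condition $\omega^\clk_i(m)=\rho^\clk_i(m)$ is simply $O_i(\omega(m))=O_i(\rho(m))$, i.e. $\omega(m)$ lies in the finite set $\{s\in S\mid O_i(s)=O_i(\rho(m))\}$. Hence $\R^\clk(M)(\kset_i^\clk(r,m))$ corresponds to a finite union, over finite paths $s_0\ldots s_m$ of $M$ with $O_i(s_m)=O_i(\rho(m))$, of the basic cylinders $\infinitepaths{M}\pathsover{s_0\ldots s_m}$; this union is nonempty because it contains the cylinder determined by $\rho[0\ldots m]$ itself. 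For the perfect recall semantics, $\rho^\spr_i(m)=O_i(\rho(0))\ldots O_i(\rho(m))$, so the condition becomes $O_i(\omega(k))=O_i(\rho(k))$ for all $k\le m$; again this picks out a finite union of basic cylinders $\infinitepaths{M}\pathsover{s_0\ldots s_m}$ (those with $O_i(s_k)=O_i(\rho(k))$ for $k\le m$), nonempty for the same reason. In both cases measurability is immediate since $\F$ contains all basic cylinders and is closed under finite union.

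Finally I would argue positivity of the measure: the cylinder $\infinitepaths{M}\pathsover{\rho[0\ldots m]}$ has measure $PI(\rho(0))\cdot\prod_{k=0}^{m-1}PT(\rho(k),\rho(k+1))$, and since $\rho$ is an infinite path of $M$, each factor is nonzero by definition of ``path'', so this measure is $>0$; as $\R^x(M)(\kset_i^x(r,m))$ contains this cylinder (after transporting along the correspondence), its measure is at least as large, hence $>0$. I do not anticipate a serious obstacle here; the only point requiring mild care is bookkeeping the correspondence between infinite paths and runs (and the fact, noted in the excerpt, that synchrony makes the point-set/run-set correspondence one-to-one), so that "measurable in $\Pspace^x(M)$" is justified by pulling back through the measure-isomorphism $\rho\mapsto\rho^x$ from $\Pspace(M)$ to $\Pspace^x(M)$.
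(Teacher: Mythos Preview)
Your proposal is correct and follows essentially the same approach as the paper: you express $\R^x(M)(\kset_i^x(r,m))$ as a finite union of (images of) basic cylinder sets indexed by the finitely many length-$m$ paths matching the agent's local state, conclude measurability, and obtain strict positivity from the cylinder over $\rho[0\ldots m]$ itself. The paper packages the case split via an equivalence relation $\sim_i^x$ on finite paths, but the content is identical.
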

\begin{proof}
For $x\in \{\spr, \clk\}$, define the relations $\sim^x$ on finite paths of $M$ as follows. 
For finite paths $\rho = s_0 \ldots s_m$ and $\omega = t_0 \ldots t_{k}$, 
define $\rho \sim_i^\clk \omega$ if $m=k$ and $O_i(s_m) = O_i(t_m)$, 
and  $\rho \sim_i^\spr \omega$ if $m=k$ and $O_i(s_j) = O_i(t_j)$ for all $j= 0\ldots m$. 
Note that in both cases, $\rho \sim_i^x \omega$ implies that $|\rho| = |\omega|$. 
Since the number of paths of length $m$ is finite, it follows that the number of $\omega$
such that $\rho \sim^x_i \omega$ is finite. If $r\in \R^x(M)$ then there exists $\rho \in \infinitepaths M$ such that 
$r = \rho^x$. Let $U = \{ \omega \in \finitepaths M m~|~ \omega \sim_i^x \rho[0\ldots m] \}$. 
It is straightforward from the definitions that 
\[ 
  \R^x(M)(\kset_i^x(\rho^x,m))= \bigcup_{\omega \in U} ~(\infinitepaths{M} \pathsover \omega)^x 
\]
The sets $\R^x(M)(\kset_i^x(\rho^x,m))$,  for $x\in \{\spr, \clk\}$,  are therefore each a finite union of measurable sets, hence measurable. 
The fact that their measure is $>0$ follows from the fact that both contain 
$ (\infinitepaths{M} \pathsover \rho)^x$, which has non-zero measure since $\rho$ is 
path of $M$. 
\end{proof}

Next, in giving the semantics of $\PLTLsK$, we require that sets of the form 
$\{(r',m')\in \kset^x_i(r,m)~|~\I^x(M),(r',m')\models\phi\}$ are measurable in 
$\Pspace^x_i(r,m)$. Similarly, in the semantics of  $\WMLOKP$, 
we require that sets of the form $T^x_\timeassgt(\phi) = \{r \in \R^x(M)~|~ \I^x(M),\timeassgt,r\models \phi\}$
are measurable, where $\timeassgt$ is a temporal assignment and $\phi$ is 
a formula of $\WMLOKP$. 
This is established in the following result. 

\begin{lemma} 
Let $M$ be a finite PO-DTMC and $x \in \{\spr, \clk\}$. 
For every set $S\subseteq \R(M)$ of runs of $M$ such that the semantic definitions above of 
$\PLTLsK$ and $\WMLOKP$ in $\I^x(M)$ refer to $\mu(S)$, the set $S$ is measurable in $\Pspace(M)$. 
\end{lemma}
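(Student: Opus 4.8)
The plan is to prove, by a simultaneous structural induction on formulas, that every set of runs to which $\mu$ is applied in the two semantic definitions is measurable. For the $\WMLOKP$ clauses the induction hypothesis is that $T^x_\timeassgt(\phi)=\{r\in\R^x(M)~|~\I^x(M),\timeassgt,r\models\phi\}$ is measurable for every formula $\phi$ and every assignment $\timeassgt$; for the \PLTLsK clauses it is that $R_m(\phi)=\{\rho\in\infinitepaths M~|~\I^x(M),(\rho^x,m)\models\phi\}$ is measurable for every formula $\phi$ and every $m\in\Nat$. Since $\rho\mapsto\rho^x$ is a measure-isomorphism from $\Pspace(M)$ to $\Pspace^x(M)$, and since (by synchrony) every point of $\kset^x_i(r,m)$ has time coordinate $m$, every set $S$ named in the statement is, under this identification, a Boolean combination of sets of the shapes $R_m(\phi)$ or $T^x_\timeassgt(\phi)$ together with sets $\R^x(M)(\kset^x_i(r,m))$, the latter being measurable by the preceding lemma; so it suffices to establish the two induction hypotheses.

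The base cases and the Boolean cases are routine: the atoms $p(t)$ (resp. $p$) define a finite union of basic cylinders $(\infinitepaths M\pathsover\sigma)^x$ over finite paths $\sigma$ of the appropriate length whose final state satisfies $p$; the atoms $X(t)$ and $t_1<t_2$ do not depend on the run and so define $\emptyset$ or the whole space; and $\neg$ and $\wedge$ use closure of $\F$ under complementation and finite intersection. The quantifier and until cases use closure of $\F$ under countable unions and intersections: $T^x_\timeassgt(\forall t\,\phi)=\bigcap_{n\in\Nat}T^x_{\timeassgt[t\mapsto n]}(\phi)$, while $T^x_\timeassgt(\forall X\,\phi)=\bigcap_{U}T^x_{\timeassgt[X\mapsto U]}(\phi)$ with $U$ ranging over the \emph{countably} many finite subsets of $\Nat$; on the \PLTLsK side $R_m(X\phi)=R_{m+1}(\phi)$ and $R_m(\phi_1\until\phi_2)=\bigcup_{k\geq m}\bigl(R_k(\phi_2)\cap\bigcap_{m\leq l<k}R_l(\phi_1)\bigr)$. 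For $A\phi$, note that $\rho\in R_m(A\phi)$ iff $\infinitepaths M\pathsover{\rho[0..m]}\subseteq R_m(\phi)$, so $R_m(A\phi)$ is the union of the finitely many length-$m$ cylinders contained in the (by induction measurable) set $R_m(\phi)$.

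The content of the argument is in the knowledge and probability clauses, and this is where the finiteness facts from the previous lemmas are used. Fixing the relevant time, the equivalence $\sim^x_i$ partitions the paths of that length into finitely many blocks, each block being a finite union of cylinders; $\rho$ satisfies $K_i\phi$ (resp. $K_{i,t}\phi$) at that time exactly when the whole block of $\rho[0..m]$ lies inside $R_m(\phi)$ (resp. $T^x_\timeassgt(\phi)$), so the corresponding set is a finite union of measurable blocks. For an atom $f(P_1,\ldots,P_k)\bowtie c$, one first checks that each probability term is well defined: the numerator of an agent term $\prob_i(\psi)$ or $\prob_{i,t}(\psi)$ is, after the identification above, a set of the form $\R^x(M)(\kset^x_i(r,m))\cap R_m(\psi)$ (and similarly $\R^x(M)(\kset^x_i(r,\timeassgt(t)))\cap T^x_\timeassgt(\psi)$ in the $\WMLOKP$ case) --- measurable by the preceding lemma and the induction hypothesis for the proper subformula $\psi$ --- while the denominator $\R^x(M)(\kset^x_i(r,m))$ is measurable with strictly positive measure, again by the preceding lemma; the global term $\prob(\psi)$ likewise uses $\mu(T^x_\timeassgt(\psi))$. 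Because each such term depends on the run only through finitely many $\sim^x_i$-classes at the times it mentions, $f(P_1,\ldots,P_k)$ is constant on each block of the finite common refinement of those partitions, and the atom again defines a finite union of measurable blocks. The one delicate point is the mutual recursion between the satisfaction relation and the evaluation of probability terms; I would therefore run the induction on formula \emph{size}, proving at each step simultaneously that the subformula sets are measurable and that the probability terms formed from proper subformulas are well defined, which is exactly what legitimizes the clause for $f(P_1,\ldots,P_k)\bowtie c$. Everything else is bookkeeping with the $\sigma$-algebra axioms, so I expect this interleaving to be the only real obstacle.
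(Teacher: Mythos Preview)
Your proposal is correct and follows essentially the same approach as the paper: structural induction showing that each formula defines a measurable set of runs, with atoms and the knowledge/probability clauses yielding finite unions of basic cylinders (via the finite prefix dependence), Booleans handled by closure of $\F$, and quantifiers/until by countable operations. The one organizational difference is that the paper treats only $\WMLOKP$ directly and obtains the $\PLTLsK$ case as a corollary of the translation in Proposition~\ref{prop:trans}, whereas you run the two inductions in parallel; your concern about the mutual recursion is already absorbed by ordinary structural induction, since every $\psi$ occurring inside a probability term is a proper subformula of the atom $f(P_1,\ldots,P_k)\bowtie c$.
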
 

\begin{proof} 
We show this just for the logic $\WMLOKP$; for $\PLTLsK$ the result is a corollary 
derivable using using the fact that $\PLTLsK$ can be translated to $\WMLOKP$ (Proposition~\ref{prop:trans} 
below). We use structural induction on $\phi$ to show that $T^x_\timeassgt(\phi)$ is measurable for 
all temporal assignments $\timeassgt$. 

If $\phi= q(t)$  where $q$ is an atomic proposition and $t$ is a temporal variable, then 
note that for all paths $\rho \in \finitepaths{M}{\timeassgt(t)}$ 
and runs $r \in (\infinitepaths M \pathsover \rho)^x $, 
we have $\I^x(M), \timeassgt,r \models q(t)$ iff $q \in \pi(r(\timeassgt(t)))$ iff $q \in \pi(\rho(\timeassgt(t)))$. 
Thus, 
$$ T^x_\timeassgt(\phi) = \bigcup_{\rho \in \finitepaths{M}{\timeassgt(t)},~q\in \pi(\rho(\timeassgt(t)))} (\infinitepaths M \pathsover \rho)^x $$ 
which is a finite union of basic measurable sets, hence measurable. 

If $\phi= X(t)$  where $X$ is a set variable and $t$ is a temporal variable, then 
$ T^x_\timeassgt(\phi)$ is $\R(M)$ in case $\timeassgt(t) \in \timeassgt(X)$ and is 
$\emptyset$ otherwise. In either case, it is measurable. 
Similarly, if $\phi = t_1 <t_2$ then $T^x_\timeassgt(\phi)$ is either $\R(M)$ or $\emptyset$, and therefore measurable. 

If $\phi = \phi_1\land \phi_2$ then $T^x_\timeassgt(\phi) =  T^x_\timeassgt(\phi_1)\cap T^x _\timeassgt(\phi_2)$ is measurable, 
because, by the inductive hypothesis, $T^x_\timeassgt(\phi_1)$ and $ T^x _\timeassgt(\phi_2)$  are. 
Similarly,  $T^x_\timeassgt(\neg \phi) = \R^x(M) -T^x_\timeassgt(\phi)$ is measurable 
because  $T^x_\timeassgt(\phi)$ is measurable, by induction. 

For the quantifiers, we have 
$$ T^x_\timeassgt(\forall t( \phi)) = \bigcap_{m\in \Nat}   T^x_{\timeassgt[t\mapsto m]}( \phi)$$ 
and 
$$ T^x_\timeassgt(\forall X( \phi)) = \bigcap_{U\subset \Nat, ~ U ~finite }   T^x_{\timeassgt[X\mapsto U]}( \phi)~.$$ 
In either case, we have a countable intersection of sets that are measurable, by induction, 
hence these sets are also measurable.

For the knowledge formula $\phi = K_{i,t}\psi$, 
note that for $\rho\in \infinitepaths M$, 
if $\I^x(M), \timeassgt, \rho^x \models K_{i,t}\psi$, then
$$(\infinitepaths M \pathsover \rho[0\ldots  {\timeassgt(t)}])^x \subseteq T^x_\timeassgt(K_{i,t}\psi)~,$$ else 
$$(\infinitepaths M \pathsover \rho[0\ldots  {\timeassgt(t)}])^x \cap T^x_\timeassgt(K_{i,t}\psi)  = \emptyset~.$$ 
This is because, under both the clock and perfect recall semantics,
satisfaction of $K_{i,t}\psi$ depends only on the observations made to time $\timeassgt(t)$. 
Taking $$U = \{ \rho[0\ldots \timeassgt(t)]~|~\rho \in \infinitepaths M \text{ and }  \I^x(M), \timeassgt, \rho^x \models K_{i,t}\psi\}~,$$ 
we have  that 
\[ T^x_\timeassgt(K_{i,t}\psi) = \bigcup_{\rho \in P(i,\timeassgt, \psi)} (\infinitepaths M \pathsover \rho)^x \] 
is a finite union of basic measurable sets, hence measurable.

Finally, consider the probability  formulas $\phi = f(P_1, \ldots ,P_k) \bowtie c$, where $P_j = \prob_{i_j,t_j}(\psi_j)$ 
or $P_j = \prob(\psi_j)$. Let $m$ be the maximum of the values $\timeassgt(t_j)$ for $t_j$ a temporal 
variable in the former type of probability term. Note that the value 
$[ \prob(\psi_j)]_{\I,\timeassgt,r}$ is independent of the run $r$, and if $r$ and $r'$ are runs with 
$r[0,\ldots, m] = r'[0,\ldots,m]$ then $[ \prob_{i_j,t_j}(\psi_j)]_{\I^x(M),\timeassgt,r} =  [\prob_{i_j,t_j}(\psi_j)]_{\I^x(M),\timeassgt,r'}$. 
It follows also that when $r[0,\ldots, m] = r'[0,\ldots,m]$, we have $\I^x(M),\timeassgt,r \models  f(P_1, \ldots ,P_k) \bowtie c$ iff
 $\I^x(M),\timeassgt,r' \models  f(P_1, \ldots ,P_k) \bowtie c$. 
 Let $$U = \{\rho[0,\ldots, m]~|~ \rho \in \infinitepaths M,~ \I^x(M),\timeassgt, \rho^x \models  f(P_1, \ldots ,P_k) \bowtie c\} $$
 be the set of all prefixes of length $m$ of paths satisfying $ f(P_1, \ldots ,P_k) \bowtie c$. 
 The set $U$ is finite, and we have, by the above, that 
\[ T^x_\timeassgt(f(P_1, \ldots ,P_k) \bowtie c) = \bigcup_{\rho \in U}~   (\infinitepaths M \pathsover \rho)^x \] 
is a finite union of basic measurable sets, hence measurable. 
\end{proof}

\subsection{Discussion} \label{sec:discussion}

We have defined our logics to be quite expressive in the type of atomic 
probability assertions we have allowed, which involve polynomials of
probability expressions. In $\WMLOKP$, these expressions
may explicitly refer to different time points. Some  
existing logics of probability in the literature use a more restricted expressiveness, 
e.g., \cite{FaginHalpern} consider a logic that has only linear combinations
of probability expressions, and many logics \cite{BRS06,prismbook} allow only inequalities 
involving a single probability term. Here give some motivation to show that the richness 
we have allowed is natural and useful for applications. 

{\bf Polynomials:} There are several motivations for allowing polynomial combinations
of probability expressions. One, as noted in \cite{FaginHM90}, 
is that polynomials arise naturally from conditional probability. 
If we would like to include linear combinations of conditional probability expressions in the language, we find that this 
motivates a generalization to polynomial combinations of probability expressions. 
Consider the formula
$\prob(\phi_1|\psi_1) + \prob(\phi_2|\psi_2) \leq c$. Expanding out the definition of conditional probability, 
we have 
$$\frac{\prob(\phi_1\land \psi_1)}{\prob(\psi_1)} + \frac{\prob(\phi_2\land \psi_2)}{\prob(\psi_2)} \leq c~.$$ 
We see here that  there is a risk of division by zero that needs to be managed
in order for the semantics of this formula to be fully defined. 
One way to do so is to multiply out the denominators, resulting in the form 
$$\prob(\phi_1\land \psi_1)\cdot \prob(\psi_2) + \prob(\phi_2\land \psi_2)\cdot \prob(\psi_1) \leq c\cdot \prob(\psi_1)\cdot \prob(\psi_2)~$$ 
which is meaningful in all cases. (Should this not have the desired semantics in case  one of the $\prob(\psi_i)$ is zero, 
an additional formula can be added that handles this special case as desired.)  However, although we 
started with a linear probability expression, we now have multiplicative terms.  
This suggests that the appropriate way to add the expressiveness of conditional probability to the 
language is to admit atomic formulas that compare polynomial combinations of probability expressions. 

More generally, although it is less of relevance for purposes of model checking, 
and more of use for axiomatization of the logic, allowing polynomials also naturally enables 
familiar reasoning patterns  to be captured inside the logic. In particular, validities
such as $\prob(\phi_1 \lor \phi_2) = \prob(\phi_1) + \prob(\phi_2)$ 
when $\phi_1$ and $\phi_2$ are mutually exclusive and 
$\prob(\phi_1 \land \phi_2) = \prob(\phi_1) \cdot \prob(\phi_2)$ 
when $\phi_1$ and $\phi_2$ are independent show that both addition
and multiplication of probability terms arises naturally.

{\bf Mixed-time:} A second way in which our logics are rich  is in 
allowing probability atoms that refer to different moments of time. 
In $\PLTLsK$ this already the case because combinations such as 
$\prior_A(\phi) = \prob_A(\phi)$ are allowed, which refer to both the current 
time and to time $0$. The logic $\WMLOKP$ takes such temporal mixing
further by allowing reference to  time points explicitly named using time variables. 

Such temporal mixing is natural, since there are potential applications that require
this expressiveness. For example, in computer security, one often wants to 
say that the adversary $A$ does not learn anything about a secret from watching 
an exchange between two parties. However, it is often the case that 
the adversary  knows some prior distribution over the secrets. (For example, the 
secret may be a password, and choice of passwords by users are very non-uniform, 
with some passwords like `123456' having a very high probability.)  
This means that the simple assertion that the adversary does not know 
the secret, or that the adversary has a uniform distribution over the secret, 
does not capture the appropriate notion of security. 
Instead, as recognised already by Shannon in his work on secrecy \cite{shannon49}, 
we need to assert that the adversary's distribution over the 
secret has not changed as a result of its observations. 
This requires talking about the adversary's probability at two time points. 
For example, \cite{HLM11} capture an anonymity property  
by means of formulas using terms $\prior_A(\phi) = \prob_A(\phi)$. 

{\bf Mixed-time polynomials:}  Additionally, the logic of probability applied to formulas referring
to different times leads naturally  to polynomial combinations of 
probability terms, each referring to a different moment of time. 
For example, although $\WMLOP$  allows only formulas of the form 
$\prob(\phi(t_1, \ldots, t_n)) \bowtie c$, where the $t_i$ are time variables, the decision algorithm of \cite{BRS06} 
  uses the fact that, when $t_1< t_2 < \ldots < t_n$, the formula $\phi(t_1, \ldots, t_n)$ is equivalent to a formula 
of the form $\phi_1(t_1) \land \phi_2(t_2-t_1) \land \ldots \land \phi_{n}(t_n-t_{n-1}) \land  \phi_{n+1}(t_n)$, 
where the $\phi_i(t)$ are independent past-time formulas for $i = 1\ldots n$ and $\phi_{n+1}(t)$ is a  
future time formula. (This statement is closely related to Kamp's theorem \cite{kamp}.) This enables  $\prob(\phi(t_1, \ldots, t_n))$ to be expressed as a sum of products of terms of the form
 $\prob(\phi_i(u))$ where $\phi_i(u)$ has just a single free time variable $u$. 
 Thus, although mixed-time probability formulas are not directly expressible in the logic of \cite{BRS06}, 
 specific ones are implicitly expressible, and  the  extension is a mild one. It is worth remarking, 
 however, that the coefficients of the polynomial expansion of $\prob(\phi(t_1, \ldots, t_n))$ 
 are all positive, so we do not quite have arbitrary polynomials here. We return to this point below.

\section{Relating the logics} \label{sec:relations} 

The logic $\WMLOKP$ is very expressive, so it is not surprising that 
it can capture all of $\PLTLsK$. The following result makes this precise.

For the results below, it is convenient to add to the system a special agent $\bot$ that is blind,
and an agent $\top$ that has complete information about the state. 
In the context of PO-DTMC's 
these agents are obtained by taking the observation functions to satisfy 
$O_\bot(s) = O_\bot(t)$ and $O_\top(s) =s $ for all states $s,t$. 
 We write $\Box \phi$  for $K_{\bot, t} \phi$ where $t$ is any time
variable. This gives a \emph{universal modality}:  $\Box\phi$ says that $\phi$ holds on all runs.  
We write $[t\mapsto n]$ for the temporal assignment defined only on temporal variable $t$, and mapping this to $n$.

\begin{propn} \label{prop:trans} 
Let $M$ be a PO-DTMC with agent $\top$ and let $x \in \{\spr, \clk\}$. 
For every formula $\phi$ of  $\PLTLsK$, there exists a formula $\phi^*(t)$ of $\WMLOKP$
with $t$ the only free variable, 
such that $\I^x(M), (r,n) \models \phi$ iff $\I^x(M), [t\mapsto n], r \models \phi^*(t)$  for all runs $r$. 
\end{propn}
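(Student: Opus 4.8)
The plan is to prove Proposition~\ref{prop:trans} by structural induction on the formula $\phi$ of $\PLTLsK$, simultaneously defining the translation $\phi \mapsto \phi^*(t)$ and establishing the stated satisfaction-preservation property. The free time variable $t$ in $\phi^*(t)$ plays the role of the ``current time'' $m$ in the pointed semantics $\I^x(M),(r,m)\models\phi$, and the whole point of the construction is that all the temporal and epistemic operators of $\PLTLsK$ can be paraphrased using explicit quantification over time variables in $\WMLOKP$, together with the universal modality $\Box = K_{\bot,t}$.

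For the base and Boolean cases this is routine: $p^*(t) = p(t)$, $(\neg\phi)^*(t) = \neg(\phi^*(t))$, $(\phi_1\wedge\phi_2)^*(t) = \phi_1^*(t)\wedge\phi_2^*(t)$. For the temporal operators I would use fresh time variables: $(X\phi)^*(t) = \exists t'\,(t' = t+1 \wedge \phi^*(t'))$ (encoding $t' = t+1$ as $t < t' \wedge \neg\exists t''(t < t'' \wedge t'' < t')$, since there is no successor function in the grammar), and $(\phi_1\until\phi_2)^*(t) = \exists t'\,(t \le t' \wedge \phi_2^*(t') \wedge \forall t''((t\le t'' \wedge t'' < t') \rightarrow \phi_1^*(t'')))$, again spelling out $\le$ from $<$ and equality. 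For the path quantifier $A\phi$, the pointed semantics says $\phi$ holds at $(r',m)$ for every run $r'$ agreeing with $r$ on $r[0..m]$; this is exactly knowledge of the fully-informed agent $\top$ about the prefix up to time $t$, so I set $(A\phi)^*(t) = K_{\top,t}(\phi^*(t))$, which works because $\kset^x_\top(r,m)$ is precisely the set of points on runs sharing the prefix $r[0..m]$ under both $\spr$ and $\clk$ (the observation function $O_\top$ is the identity). For knowledge, $(K_i\phi)^*(t) = K_{i,t}(\phi^*(t))$ directly matches the $\WMLOKP$ clause for $K_{i,t}$, noting that in both semantics $\kset_i(r,m)$ is computed from observations through time $m$.

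The probability expressions are handled similarly: $[\prob_i\phi]_{\I,(r,m)}$ is by definition $\mu_{r,m,i}$ of the set of $i$-indistinguishable points satisfying $\phi$, which under the one-to-one correspondence between points in $\kset_i(r,m)$ and runs in $\R(\kset_i(r,m))$ (valid by synchrony) equals the conditional measure appearing in the $\WMLOKP$ semantics of $\prob_{i,t}(\phi)$; so I translate $\prob_i(\phi)$ to $\prob_{i,t}(\phi^*(t))$ and $\prior_i(\phi)$ to $\prob_{i,t_0}(\phi^*(t_0))$ where $t_0$ is a bound variable forced to equal $0$ (i.e. $\exists t_0(\neg\exists t'(t' < t_0) \wedge \ldots)$), then translate $f(P_1,\ldots,P_k)\bowtie c$ clause-by-clause over the $P_j$. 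The induction hypothesis guarantees that the sets whose measures are taken agree on both sides. One bookkeeping subtlety: the probability term translations introduce their own free variable $t$ (or a bound $t_0$), and when $\prob_i(\phi)$ occurs nested inside another formula we must be careful that $\phi^*$'s free variable is consistently the one indexing the probability operator — I would state the induction hypothesis so that $\phi^*$ always has exactly the single free variable $t$ and reuse that same $t$ everywhere, renaming bound variables apart as needed.

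The main obstacle, and the place requiring genuine care rather than symbol-pushing, is verifying the path-quantifier case $A\phi$: I must check that $\kset^x_\top(\rho^x, m) = \{(\omega^x, m) : \omega[0..m] = \rho[0..m]\}$ holds identically for $x = \spr$ and $x = \clk$, using $O_\top(s) = s$ — for $\spr$ this is immediate from the definition of $\rho^\spr_\top(m) = O_\top(\rho(0))\ldots O_\top(\rho(m)) = \rho(0)\ldots\rho(m)$, while for $\clk$ it follows because $\rho^\clk_\top(m) = (m, \rho(m))$ determines only $\rho(m)$ at that instant, so one also needs the synchrony argument plus the fact that the $\WMLOKP$ semantics of $K_{\top,t}$ quantifies over $\kset_\top(r,\tau(t))$, which for the clock agent still only pins down the single state $\rho(m)$ — hence $(A\phi)^*(t) = K_{\top,t}(\phi^*(t))$ is correct only under $\spr$ and needs adjustment under $\clk$, or one must argue that the ``environment state at time $m$'' versus ``full prefix'' distinction is immaterial here because $A\phi$ in $\PLTLsK$ is genuinely defined via prefix-agreement, not via $\kset_\top$. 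Resolving this cleanly — possibly by noting that under $\clk$ one cannot faithfully encode $A\phi$ and so the proposition implicitly relies on $\top$'s $\spr$-knowledge regardless of $x$, or by restricting attention to how $A$ interacts with the rest — is the crux of the argument, and I would address it by working directly from the definition of $\kset^x_\top$ given in the Markov-chain construction and checking both cases explicitly.
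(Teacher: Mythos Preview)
Your inductive translation matches the paper's almost clause for clause, and you are right that the only nontrivial case is $A\phi$ under the clock semantics: for $x=\clk$ the agent $\top$ has local state $(m,\rho(m))$, so $K_{\top,t}$ quantifies only over runs sharing the \emph{current} environment state, not the full prefix $r[0..m]$ that the $\PLTLsK$ semantics of $A$ demands. You diagnose this correctly but do not close the gap; your two suggested escapes (that the proposition tacitly uses $\top$'s perfect-recall knowledge regardless of $x$, or that one can ``restrict attention to how $A$ interacts with the rest'') are respectively false and too vague to count as a proof.

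The paper's fix is concrete and worth knowing. One first observes that, under the clock semantics, the \emph{only} dependence of a translated formula $\phi^*(t)$ on the past of the run is through subterms $\prob_{i,0}(\cdot)$ arising from $\prior_i$, and these depend only on the initial environment state $r_e(0)$; every other construct ($p(u)$, $K_{i,u}$, $\prob_{i,u}$, and the bound time variables introduced by $X$ and $\until$, which are all $\geq t$) is determined by $r_e[m..\infty)$. Hence it suffices to preserve the initial state in addition to the current one. The paper does this by assuming, without loss of generality, fresh propositions $p_s$ with $p_s\in\pi(s')$ iff $s=s'$, and setting, for the clock case,
\[
(A\phi)^*(t) \;=\; \bigwedge_{s\in S}\bigl(p_s(0)\;\Rightarrow\;K_{\top,t}(\,p_s(0)\Rightarrow\phi^*(t)\,)\bigr).
\]
This restricts the universal quantification inside $K_{\top,t}$ to runs sharing both $r_e(0)$ and $r_e(m)$; by the Markov property these yield exactly the same set of pairs $(r'_e(0),\,r'_e[m..\infty))$ as the prefix-agreeing runs, and since $\phi^*(t)$ under $\clk$ is a function only of that pair, the translation is correct. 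Your perfect-recall clause $(A\phi)^*(t)=K_{\top,t}(\phi^*(t))$ is already what the paper uses for $x=\spr$.
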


\begin{proof} 
The translation is defined by the following recursion: 
\[
\begin{split}
& p^*(t)  =   p(t)  \\
 & (\neg \phi)^*(t) = \neg \phi^*(t)  \\
 &  (\phi_1 \land \phi_2)^*(t) =  \phi^*_1(t) \land \phi^*_2(t) \\
&(X\phi)^*(t)  =  \exists u(u=t+1 \land \phi^*(u)) \\ 
& (K_i \phi)^*(t)  = K_{i,t}( \phi^*(t)),\\  
&(\phi_1 \until \phi_2)^*(t) =  \exists u\geq t( \phi^*_2(u) \land \forall v( t \leq v < u \rimp \phi_1^*(v)))\\
&(\prob_i(\phi))^*(t)  =  \prob_{i,t}(\phi^*(t)) \\
& (\prior_i(\phi))^*(t)  =   \prob_{i,0}(\phi^*(0)) \\ 
&(f(P_1, \ldots, P_k) \bowtie c)^*(t)  =  f(P^*_1(t), \ldots, P^*_k(t)) \bowtie c \\ 
\end{split} 
\]
Note that 
$u=v$ is definable as $\neg( u<v \lor v<u)$, that 
$u = t+1$ is definable as $ u >t  \land \forall v> t ~( u \leq v)$, and that $u = 0$ is definable as
$ \neg \exists  t( u = t+1)$. 
We can use $(A\phi)^*(t)  =   K_{\top, t} (\phi^*(t))$ to translate $A\phi$ in the perfect recall case. 
In case of the clock semantics, this translation loses the information about the initial state, 
which is required for correctness of the translation of $\prior_i(\phi)$. In this case, we
introduce, without loss of generality, new propositions $p_s$ for each state $s$, 
such that $p_s \in \pi_e(t) $ iff $s=t$, and take 
$$(A\phi)^*(t)  =   \bigwedge_{s\in S } (p_s(0) \rimp K_{\top, t} (p_s(0) \rimp \phi^*(t)))~.$$
\end{proof}

\newcommand{\obs}{\mathit{obs}}

With respect to the specific systems we derive from PO-DTMC's with respect to the
clock and perfect recall semantics, we are able to make some further statements 
that simplify the logic $\WMLOKP$ by eliminating some of the operators. 
These results are useful for the undecidability results that follow. 

For the following results, we note that, without loss of generality,
we may assume that a finite PO-DTMC comes equipped
with atomic propositions that encode the observations made by the agents. 
Specifically, when agent $i$ has possible observations 
$O_i(S) = \{o_{i,1} , \ldots , o_{i,k_i}\}$, 
we assume that there are atomic propositions 
$\obs_{i,j}$ for $i \in \Agt$  and $j= 1 \ldots k_i$ such that 
for all states $s$, 
we have 
$\obs_{i,j}\in \pi(s)$ iff $O_i(s) = o_{i,j}$.
Thus, $\obs_{i,j}(t)$ holds in a run just when agent $i$ makes observation $o_{i,j}$ at time $t$.

\begin{propn} \label{prop:elimkp-clock}
With respect to $\I^\clk(M)$ for a  finite PO-DTMC $M$, the  operators $K_{i,t}$ and $\prob_{i,t}$ 
can be eliminated using the universal operator $\Box$ and polynomial comparisons of universal probability terms $\prob(\psi)$, respectively. 
For simple probability formulas $\prob_{i,t}(\phi) \bowtie c$, only linear probability comparisons are required. 
\end{propn}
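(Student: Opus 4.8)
\medskip
\noindent\textbf{Proof plan.}
The approach rests on a single observation about the clock semantics: for a run $r$ and a time $n$, the $\sim_i$-class of the point $(r,n)$ is determined solely by the observation $O_i(r(n))$, and this observation is exactly what the atomic propositions $\obs_{i,j}$ record. Thus, if $O_i(r(n)) = o_{i,j}$, then the set of runs $r'$ with $(r',n) \sim_i (r,n)$ is precisely the set of runs on which $\obs_{i,j}$ holds at time $n$. I would use this, via a structural induction on $\phi$ that rewrites innermost occurrences of agent-indexed operators first, to transform $\phi$ into a formula equivalent over $\I^\clk(M)$ whose only epistemic operator is the universal modality $\Box$ and whose only probability terms are the run-independent terms $\prob(\psi)$. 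The Boolean connectives and the quantifiers $\forall t$, $\forall X$ are passed through the recursion unchanged, so the only real work occurs at the $K_{i,t}$ nodes and at the probability atoms $f(P_1,\dots,P_k)\bowtie c$.

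For a subformula $K_{i,t}(\phi)$, assuming its argument has already been rewritten to a formula $\phi'$ free of agent-indexed operators, the key equivalence I would verify over $\I^\clk(M)$ is
$$ K_{i,t}(\phi) ~\equiv~ \bigwedge_{j=1}^{k_i} ( \obs_{i,j}(t) \rightarrow \Box( \obs_{i,j}(t) \rightarrow \phi' ) ) . $$
On any run exactly one $\obs_{i,j}(t)$ holds, and for that $j$ the corresponding conjunct asserts that $\phi'$ holds on every run making observation $o_{i,j}$ for agent $i$ at time $\timeassgt(t)$; under the clock semantics this is exactly the content of $K_{i,t}(\phi)$. Here one uses that $\Box = K_{\bot,t}$ is genuinely universal over runs, since in the clock construction $\kset_\bot(r,n) = \R^\clk(M) \times \{n\}$.

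For the probability atoms, the same observation yields, when $O_i(r(n)) = o_{i,j}$,
$$ [\prob_{i,t}(\phi)]_{\I^\clk(M),\timeassgt,r} ~=~ \frac{ [\prob(\obs_{i,j}(t) \wedge \phi')]_{\I^\clk(M),\timeassgt,r} }{ [\prob(\obs_{i,j}(t))]_{\I^\clk(M),\timeassgt,r} } , $$
where $\phi'$ is the rewrite of $\phi$, and the denominator is strictly positive by the first Lemma above, since it is the $\mu$-measure of the $\sim_i$-class of $(r,n)$, which the construction of $\I^\clk(M)$ requires to be nonzero. To eliminate such terms from an atom $f(P_1,\dots,P_k)\bowtie c$ I would case-split over the finitely many \emph{observation scenarios}: a scenario $\sigma$ fixes, for each distinct pair $(i_l,t_l)$ occurring in an agent-probability term $P_l = \prob_{i_l,t_l}(\phi_l)$, one observation index, and $\chi_\sigma$ is the conjunction of the matching propositions $\obs_{i_l,\cdot}(t_l)$. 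On runs satisfying $\chi_\sigma$ each such $P_l$ equals a ratio $A_l / B_l$ of universal probability terms with $B_l > 0$; substituting these ratios and clearing denominators by multiplying through by $\prod_l B_l^{\deg f}$ — a product of non-negative reals, hence nonnegative and in fact positive on these runs, so the relation $\bowtie$ is preserved — turns $f(P_1,\dots,P_k)\bowtie c$ into a polynomial comparison $g_\sigma \bowtie 0$ of universal probability terms. Since the $\chi_\sigma$ partition the runs of $\I^\clk(M)$, the atom is then equivalent to $\bigvee_\sigma ( \chi_\sigma \wedge g_\sigma \bowtie 0 )$. In the special case $\prob_{i,t}(\phi)\bowtie c$ there is a single term, so clearing the denominator multiplies only once by $B = \prob(\obs_{i,j}(t))$, giving the \emph{linear} form $\bigvee_j ( \obs_{i,j}(t) \wedge ( \prob(\obs_{i,j}(t) \wedge \phi') - c \cdot \prob(\obs_{i,j}(t)) \bowtie 0 ) )$.

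I expect the main obstacle to be not any deep difficulty but the bookkeeping at the probability atoms: one must check carefully that clearing denominators preserves the comparison $\bowtie$ (which uses both the non-negativity of probabilities and the strict positivity of the denominators guaranteed by the first Lemma), and that the interleaving of the observation-scenario case split with the mutual recursion between formulas and probability terms never reintroduces an agent-indexed operator. Verifying the equivalences against the clock-semantics definitions of $\sim_i$ and $\kset_i$, and handling the Boolean and quantifier cases, is routine.
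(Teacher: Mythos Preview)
Your proposal is correct and follows essentially the same approach as the paper: case-split on the observation $\obs_{i,j}(t)$, replace $K_{i,t}(\phi)$ by the universal modality guarded by the observation proposition, and replace $\prob_{i,t}(\phi)$ by the ratio $\prob(\obs_{i,j}(t)\wedge\phi')/\prob(\obs_{i,j}(t))$ before clearing denominators. The paper's proof is terser (it writes the simple case as the conjunctive form $\bigwedge_j(\obs_{i,j}(t)\rimp\ldots)$ rather than your equivalent disjunctive $\bigvee_j(\obs_{i,j}(t)\wedge\ldots)$, and merely remarks that the general polynomial case works ``similarly to the discussion of conditional probability''), whereas you spell out the observation-scenario bookkeeping and the induction explicitly, but the underlying argument is the same.
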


\begin{proof} 
The formula $$ \bigwedge_{j = i\ldots k_i} ( \obs_{i,j}(t) \rimp \Box( \obs_{i,j}(t)  \rimp \phi)$$ 
is easily seen to be equivalent to $K_{i,t}(\phi)$ in $\I^\clk(M)$.  
Similarly, $\prob_{i,t}(\phi) \bowtie c$
can be expressed as 
$$ \bigwedge_{j = i\ldots k_i} ( \obs_{i,j}(t) \rimp \prob (  \obs_{i,j}(t)  \land \phi ) \bowtie c \cdot \prob ( \obs_{i,j}(t) )) ~.$$
A similar transformation applies for more general agent probability comparisons, but we note that 
linear comparisons may transform to polynomial comparisons: similarly to the discussion of conditional probability in Section~\ref{sec:discussion}.    
\end{proof}

\begin{propn} \label{prop:qi} 
With respect to $\I^\spr(M)$ for a finite PO-DTMC $M$,
the  probability formulas $\prob_{i,t}(\phi) \bowtie c$ can be reduced to linear comparisons using only terms $\prob(\psi)$, provided 
second-order quantifying-in is permitted. Knowledge terms $K_{i,t}$ can be reduced to the universal modality $\Box$, 
provided   second-order quantifying-in is permitted for this modality. 
\end{propn}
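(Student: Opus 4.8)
The plan is to reduce the perfect-recall case to the clock case "with a twist": in $\I^\clk(M)$ the data distinguishing runs for agent $i$ at time $t$ is a single observation drawn from the finite set $O_i(S)$, so a finite conjunction over the propositions $\obs_{i,j}$ does the job (Proposition~\ref{prop:elimkp-clock}); in $\I^\spr(M)$ agent $i$'s local state at time $m$ on the run $\rho^\spr$ is the whole word $O_i(\rho(0))\ldots O_i(\rho(m))$, so two runs are $\sim_i$ at time $t$ exactly when they induce the same observation sequence for $i$ throughout $[0,t]$. Since $t$ is a variable, this word has unbounded length and cannot be pinned down by a finite conjunction. This is precisely where second-order quantifying-in earns its keep: I would use one set variable $X_j$ per observation symbol $o_{i,j}$ of agent $i$ (for $j=1,\ldots,k_i$), with the intention that $X_j$ records the set of times in $[0,t]$ at which agent $i$ observes $o_{i,j}$. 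These witnessing sets are subsets of $\{0,\ldots,\tau(t)\}$, hence finite, so the weak second-order quantification of $\WMLOKP$ suffices.

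Concretely, I would introduce the "history-matching" formula
$$\chi_i(t) \;:=\; \bigwedge_{j=1}^{k_i} \forall u\,\big(u \leq t \rimp (\obs_{i,j}(u) \dimp X_j(u))\big),$$
which, evaluated on a run $r'$ under an assignment sending each $X_j$ to a finite set $A_j$, holds exactly when agent $i$'s observation sequence along $r'$ on $[0,t]$ is the one encoded by $(A_1,\ldots,A_{k_i})$; because the $\obs_{i,j}$ are mutually exclusive and exhaustive, the restrictions $A_j\cap[0,\tau(t)]$ are automatically forced to partition $\{0,\ldots,\tau(t)\}$, and any elements of $A_j$ above $\tau(t)$ are irrelevant to $\chi_i(t)$. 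I then claim that over $\I^\spr(M)$ the formula $K_{i,t}(\phi)$ is equivalent to
$$\exists X_1 \ldots \exists X_{k_i}\,\Big(\chi_i(t) \;\land\; \Box\big(\chi_i(t) \rimp \phi\big)\Big).$$
The left conjunct $\chi_i(t)$ pins the $X_j$ down (on $[0,t]$) to agent $i$'s actual observation history along the run being evaluated — this is essential, since without it one could take the $X_j$ so that $\chi_i(t)$ is false on every run, making $\Box(\chi_i(t)\rimp\phi)$ vacuously true. Given this, $\Box(\chi_i(t)\rimp\phi)$ asserts that every run agreeing with that history up to time $t$ satisfies $\phi$, which is the semantics of $K_{i,t}\phi$; conversely any valid choice of the $X_j$ induces the same conditioning set, so the equivalence goes both ways. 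This uses $\Box$ with a set variable occurring in its scope, i.e. second-order quantifying-in for the universal modality, as permitted by the statement.

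For the probability atoms I would clear the denominator in the definition of $[\prob_{i,t}(\phi)]$ as in Proposition~\ref{prop:elimkp-clock}, using $\chi_i(t)$ to name the conditioning event: the set $\{r' : (r,\tau(t)) \sim_i (r',\tau(t))\}$ equals $\{r' : \I^\spr(M),\tau',r'\models\chi_i(t)\}$ when $\tau'$ extends $\tau$ by the correct sets, and the first lemma above guarantees this set is measurable with positive measure. Hence $\prob_{i,t}(\phi) \bowtie c$ is equivalent to
$$\exists X_1 \ldots \exists X_{k_i}\,\Big(\chi_i(t) \;\land\; \big[\,\prob(\chi_i(t) \land \phi) \bowtie c \cdot \prob(\chi_i(t))\,\big]\Big),$$
which uses only global probability terms $\prob(\cdot)$ and, since $c$ is a constant, only a \emph{linear} comparison. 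For a general atom $f(P_1,\ldots,P_k)\bowtie c$ containing several agent terms $P_j=\prob_{i_j,t_j}(\psi_j)$, one repeats the construction with disjoint families of set variables for the distinct pairs $(i_j,t_j)$ and clears all denominators, obtaining polynomial comparisons of $\prob(\cdot)$-terms, the same phenomenon noted for conditional probability in Section~\ref{sec:discussion}. I expect the main obstacle to be purely bookkeeping: carefully verifying that the "pinning" conjunct $\chi_i(t)$ together with the existential set quantifiers reconstructs the intended conditioning set independently of which witness tuple is chosen, and in particular that set-variable values outside $[0,\tau(t)]$ cause no harm; the measurability side-conditions and the positivity of $\prob(\chi_i(t))$ are already supplied by the two lemmas above.
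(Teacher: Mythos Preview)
Your proposal is correct and follows essentially the same idea as the paper: encode agent $i$'s observation history up to $t$ by one set variable per observation symbol via a ``history-matching'' formula (the paper calls it $\kappa_i$, you call it $\chi_i(t)$, but they are the same), and then replace the $\sim_i$-conditioning by conjunction with this formula inside $\Box$ or $\prob(\cdot)$. The only visible difference is that the paper universally quantifies the $X_j$ with an implication $\kappa_i \rimp \ldots$ while you existentially quantify with a conjunction $\chi_i \land \ldots$; since on any fixed run the matching tuple is uniquely determined on $[0,\tau(t)]$ and irrelevant above it, the two formulations are equivalent, and your extra remarks about why the ``pinning'' conjunct is needed and how the general polynomial case is handled are accurate elaborations.
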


\begin{proof} 
Define $\kappa_i(X_1, \ldots, X_{k_i}, t)$ to be the formula  $$\forall t' \leq t ( \bigwedge_{j= 1\ldots {k_i}} X_i(t') \dimp obs_{i,j}(t'))$$
Intuitively, this says that, up to time $t$, the second order variables $X_1, \ldots, X_k$ encode
the pattern of occurrence of observations of agent $i$ up to time $t$. 
The formula 
 $$ \forall X_1, \ldots X_{k_i}( \kappa_i(X_1, \ldots, X_{k_i}, t)  \rimp \Box( \kappa_i(X_1, \ldots, X_{k_i}, t)  \rimp \phi)$$ 
is easily seen to be equivalent to $K_{i,t}(\phi)$ in $\I^\clk(M)$.  
Similarly, $\prob_{i,t}(\phi) \bowtie c$
can be expressed as 
\[ 
\begin{split}
\forall X_1, \ldots X_{k_i}
(~~~  & \kappa_i(X_1, \ldots, X_{k_i}, t) \rimp \\
&   \prob (  \kappa_i(X_1, \ldots, X_{k_i}, t) \land \phi )  \bowtie c \cdot \prob ( \kappa_i(X_1, \ldots, X_{k_i}, t) ))~. 
\end{split}
\]
\end{proof} 

One might wonder whether the knowledge operators can be eliminated entirely using probability, 
treating $K_i \phi$ as $\prob_i(\phi) = 1$. This is indeed the case for formulas $\phi$ in $\PCTLK$. 
The essential reason is that because formulas of $\PCTLK$ depend at a point $(r,m)$ only on the run prefix $r[0 \ldots m]$, so 
the possibility that $\neg \phi$ holds on a non-empty set of measure zero does not occur. 

\begin{propn} \label{prop:probk} 
For all $\PCTLK$ formulas $\phi$ and PO-DTMC's $M$ and $x\in \{\clk, \spr\}$ we have $\I^x(M) \models K_i \phi \dimp \prob_i(\phi) = 1$. 
\end{propn}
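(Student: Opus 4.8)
The plan is to first isolate and prove the structural property hinted at in the statement: for every $\PCTLK$ formula $\phi$, the truth of $\phi$ at a point $(r,m)$ of $\I^x(M)$, and the value of every probability expression $P$ occurring in a $\PCTLK$ formula, depend only on the prefix $r[0\ldots m]$. Precisely, I would prove by simultaneous structural induction on $\phi$ that whenever $r,\tilde r\in\R^x(M)$ satisfy $r[0\ldots m]=\tilde r[0\ldots m]$ we have $\I^x(M),(r,m)\models\phi$ iff $\I^x(M),(\tilde r,m)\models\phi$, and $[P]_{\I^x(M),(r,m)}=[P]_{\I^x(M),(\tilde r,m)}$. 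The atomic and Boolean cases are immediate from $\pi^x(r,m)=\pi(r_e(m))$. For the path quantifiers, and hence for the forms $AX\psi$, $EX\psi$, $A\psi_1\until\psi_2$, $E\psi_1\until\psi_2$, the semantics of $A\psi$ at $(r,m)$ quantifies precisely over the runs $r'$ with $r'[0\ldots m]=r[0\ldots m]$ -- a set fixed by $r[0\ldots m]$ and the transition structure of $M$ -- and the inductive hypothesis applied to the body at the points reached along such $r'$ (whose prefixes all extend $r[0\ldots m]$) shows the condition is unchanged when $r$ is replaced by $\tilde r$. For $K_i\psi$ and $\prob_i(\psi)$, the key points are that in both the clock and the perfect-recall semantics the class $\kset_i^x(r,m)$ is determined by the observations along $r[0\ldots m]$, i.e.\ by $r[0\ldots m]$, and that the conditioned measure $\mu_{r,m,i}$ is, by the measurability lemma above, well defined and likewise determined by $\kset_i^x(r,m)$; combined with the inductive hypothesis for $\psi$ at the points of $\kset_i^x(r,m)$ (all at time $m$ by synchrony) this gives the claim. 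The term $\prior_i(\psi)$ depends only on $r(0)$, and polynomial comparisons inherit the property from their constituents.

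Given this lemma I would argue the equivalence pointwise at an arbitrary $(r,m)$. The implication $K_i\phi\rimp\prob_i(\phi)=1$ needs no restriction on $\phi$: if $\I^x(M),(r,m)\models K_i\phi$ then the set $\{(r',m')\in\kset_i^x(r,m)\mid \I^x(M),(r',m')\models\phi\}$ equals all of $\kset_i^x(r,m)$, whose $\mu_{r,m,i}$-measure is $1$.

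For the converse, suppose $\I^x(M),(r,m)\not\models K_i\phi$. Then there is a point $(r^*,m)\in\kset_i^x(r,m)$ with $\I^x(M),(r^*,m)\not\models\phi$; writing $r^*=\rho^x$ and $\sigma=\rho[0\ldots m]$, a finite path of $M$, consider the basic set $B=(\infinitepaths{M}\pathsover{\sigma})^x$ of runs sharing the prefix $r^*[0\ldots m]$. Every run $r'\in B$ makes the same observations up to time $m$ as $r^*$, so by transitivity of $\sim_i$ we get $(r',m)\in\kset_i^x(r,m)$, and by the structural lemma $\I^x(M),(r',m)\not\models\phi$. Hence the measurable set $\{(r',m)\mid r'\in B\}$ is contained in $\kset_i^x(r,m)$ and disjoint from the set computing $\prob_i(\phi)$ (which is measurable by the second measurability lemma above, using that $\PCTLK$ translates into $\WMLOKP$). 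Since $\sigma$ is a path of $M$, $\mu(B)=PI(\sigma(0))\cdot PT(\sigma(0),\sigma(1))\cdots PT(\sigma(m-1),\sigma(m))>0$, and since $B\subseteq\R^x(M)(\kset_i^x(r,m))$, which by the measurability lemma has positive measure, we obtain $\mu_{r,m,i}(\{(r',m)\mid r'\in B\})=\mu(B)/\mu(\R^x(M)(\kset_i^x(r,m)))>0$. Additivity of $\mu_{r,m,i}$ on these two disjoint measurable subsets of $\kset_i^x(r,m)$ then gives $[\prob_i(\phi)]_{\I^x(M),(r,m)}\le 1-\mu_{r,m,i}(\{(r',m)\mid r'\in B\})<1$, so $\I^x(M),(r,m)\not\models\prob_i(\phi)=1$, completing the equivalence.

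The main obstacle is the structural-dependence lemma, and specifically formulating its induction hypothesis strongly enough -- quantified over all pairs of prefix-equivalent runs and stated simultaneously for formulas and for probability expressions -- so that it survives the nesting of $K_i$, $\prob_i$, $A$, $E$ and polynomial comparisons. Its substantive content is that in the branching-time fragment every temporal step is guarded by a path quantifier that re-quantifies over \emph{all} runs extending the current prefix; this is exactly what fails for the path formulas of full $\PLTLsK$ (e.g.\ an unguarded $X$ or $\until$ tracked along the actual run), and is the reason the proposition is confined to $\PCTLK$. Once the lemma is in hand, the remainder is the short measure-theoretic argument above, relying only on the two measurability lemmas.
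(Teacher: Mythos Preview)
Your proof is correct and follows essentially the same route as the paper: both directions hinge on the structural lemma that truth of a $\PCTLK$ formula at $(r,m)$ depends only on $r[0\ldots m]$, and the converse then uses a basic cylinder of positive measure all of whose time-$m$ points refute $\phi$. Your write-up is in fact more careful than the paper's in spelling out the induction cases and in explicitly invoking transitivity of $\sim_i$ to place the cylinder inside $\kset_i^x(r,m)$.
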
    

\begin{proof}
We write $T^x_\phi$ for $\{(r,m)\in \R^x(M)\times \Nat~|~\I^x(M),(r,m) \models \phi\}$
If $\I^{x}(M),(r,m)\models K_i\phi$ then 
$\kset^x_i(r,m) \subseteq T^x_\phi$. 
Hence by definition, $[\prob_{i}(\phi)]_{\I^x(M),(r,m)} =\mu^x_{r,m,i}(\kset^x_i(r,m)) =1$, 
so  $\I^{x}(M),(r,m)\models \prob_i(\phi) = 1$. 

Now assume that $\I^{x}(M),(r,m)\models \prob_i(\phi) = 1$. This implies that $\mu^x_{r,m,i}(\kset^x_i(r,m)-T^x_\phi)=0$. 
We have to show that then $Z_i=\kset_i(r,m)-T^X_\phi(r,m)$ is empty. Assume otherwise. If $(r',m) \in Z_i$
 then $\I^{x}(X),(r',m)\models \neg \phi$. Now in $\PCTLK$ we can show by induction that if $\I^x(M),(r,m)\models \psi$ 
 for any formula $\psi$ then  $\I^x(M),(r',m)\models \psi$ for any run $r'$ satisfying $r'[0\ldots m] = r[0\ldots m]$. 
 (We do not actually need to consider all subformulas of $\phi$, but can take the base case to be 
the subformulas of $\phi$ of the forms $A\psi$, $E\psi$, $K_j\psi$, $p$ and $f(P_1, \ldots,P_n)$ that are
not themselves a subformula of a larger formula of one of these types.  The claim is straightforward 
from the semantics for each of these types of formulas, and an easy induction over the 
boolean operators then yields the result for $\psi$.) 
Hence the set $\R_{r[0\dots m]}$ of runs having the initial prefix $r[0\dots m]$
is a subset of $Z_i$. Thus, $\mu^x_{r,m,i} (Z_i) \geq \mu^x_{r,m,i}(\R_{r[0\dots m]}) > 0$. This contradicts the assumption that $Z_i$ is a set of measure 0. 
\end{proof}

However, this is not the case for formulas $K_i \phi$ where $\phi$ is an LTL formula. Consider the following Markov Chain. 
\begin{figure}[h] 
\centerline{\includegraphics[height=2cm]{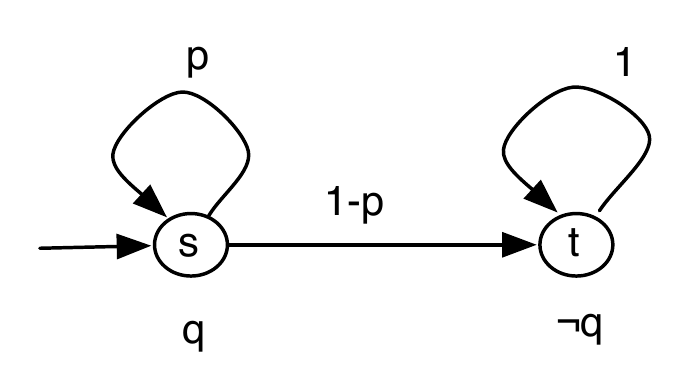}}
\end{figure} 
Here we have, at the initial state $s$, that $\neg K_i(F \neg q)$, because the agent considers it possible that 
always $q$ (this holds for all choices of observation functions). However, we have $\prob_i(F \neg q) = 1$, since the only run where $\neg q$ does not
eventually hold is the run that always remains at $s$. This run has probability zero. Hence we do not have 
that $\prob_i(F \neg q) = 1 \rimp K_i(F \neg q)$ is valid. (The converse form $K_i(\phi)  \rimp \prob_i(\phi) = 1$ 
does hold for all formulas $\phi$ of \PLTLsK, by the same argument as in the proof of Proposition~\ref{prop:probk}.)

\section{Undecidability Results} \label{sec:results} 

We can now state the main results of the paper concerning the problem of model checking formulas of (fragments of) the logics $\PLTLsK$
and $\WMLOKP$ in a PO-DTMC $M$, with respect to an epistemic  semantics $x\in \{\spr, \clk\}$. 
Using the results of Section~\ref{sec:relations}, we also obtain conclusions about extensions of $\WMLOP$ that 
do not refer to agent probability and knowledge. 

For a formula $\phi$  of $\PLTLsK$, we write $M\models^x \phi$, if $\I^x(M),(r,0)\models \phi$ for all runs $r\in \R^x(M)$.
In the case of $\WMLOKP$, we consider sentences, i.e., formulas without free variables, 
and  write $M\models^x \phi$, if $\I^x(M),\timeassgt, r\models \phi$ for all runs $r\in \R^x(M)$ and the empty 
assignment $\timeassgt$. 
The model checking problem is to determine, given a PO-DTMC $M$, a formula $\phi$,  and semantics $x \in \{\clk, \spr\}$, whether 
$M\models^x \phi$. 

\subsection{Background} 

For comparison with results below, it is worth stating a result from \cite{BRS06} concerning 
decidability of the fragment $\WMLOP$ of $\WMLOKP$ that omits knowledge operators $K_{i,t}$ and 
agent probability terms $\prob_{i,t}(\phi)$, restricts probability comparisons to the form $\prob(\phi) \bowtie c$, 
and prohibits second order quantification to cross into probability terms. 
Since the structure of agent's local states is irrelevant in this case, we write simply $\I(M)$ for the 
probabilistic interpreted system corresponding to a PO-DTMC $M$. 
To state the result, we define the \emph{parameterized} variant of a formula $\phi$ of 
$\WMLOP$ to be the formula $\phi_{x_1, \ldots, x_k}$, in which each subformula of the form $\prob(\psi) \bowtie c$
is replaced by a formula $\prob(\psi) \bowtie x_i$, with $x_i$ a fresh variable. 
We call the resulting formulas the \emph{parameterized formulas of $\WMLOP$.}  
For some $\alpha\in \Rat^k$, we can then recover the original formula $\phi$ as the instance $\phi_{\alpha}$
obtained from the parameterized variant $\phi_{x_1, \ldots, x_k}$  of $\phi$ by substituting $\alpha_i$ for  $x_i$ for 
each $i = 1\ldots k$.

\begin{theorem}[\cite{BRS06}] \label{thm:brs} 
For each parameterized sentence $\phi_{x_1, \ldots, x_k}$  of   $\WMLOP$,
one can compute for all $\epsilon >0$ a representation of a set $H_\phi\subset \Real^k$ of measure at most $\epsilon$, such that 
the problem of determining if $\I(M) \models \phi_\alpha$ is decidable for $\alpha \in \Rat \setminus H_\phi$.
\end{theorem}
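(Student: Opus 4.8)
\medskip
\noindent\textbf{Proof plan.} The strategy is to follow Beauquier, Rabinovich and Slissenko: reduce the model-checking problem for $\WMLOP$ to the decision problem for the weak monadic second-order theory of $(\Nat,<)$ (equivalently WS1S) \emph{extended by the probabilistic atomic predicates}, and then show that, for parameter vectors $\alpha$ outside a small exceptional set $H_\phi \subset \Real^k$, each such predicate defines an \emph{effectively recognizable} relation on tuples of naturals, so that the classical automata-theoretic (B\"uchi--Elgot--Trakhtenbrot) decision procedure for WS1S goes through. First I would note that every subformula occurring inside a probability term $\prob(\psi)$ has only free \emph{time} variables $t_1,\ldots,t_j$, so by Kamp's/B\"uchi's theorem it corresponds to a finite automaton $\A_\psi$ reading the sequence of state labels of a run together with unary markers locating the assigned values of $t_1,\ldots,t_j$. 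Taking the product of $M$ with $\A_\psi$, the quantity $\prob(\psi(t_1,\ldots,t_j))$ becomes the probability that this product chain satisfies a fixed ($\omega$-)regular acceptance condition, as a function of the assignment $\tau(t_1),\ldots,\tau(t_j)$.

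Next I would make that function explicit. Split into the finitely many cases given by the relative order of $\tau(t_1),\ldots,\tau(t_j)$ (including coincidences), and re-parameterize each case by the gaps $d_0 = \tau(t_{\sigma(1)})$, $d_\ell = \tau(t_{\sigma(\ell+1)}) - \tau(t_{\sigma(\ell)})$. On each such region the probability has the form $\ell^{\top} N^{d_0} C_1 N^{d_1} C_2 \cdots C_j\, r$ for a fixed rational vector $\ell$, a fixed substochastic matrix $N$, fixed $0/1$ selection matrices $C_i$, and a fixed vector $r$ of absorption probabilities (obtained once, by solving a linear system). Hence, on each region, $\prob(\psi)$ is a fixed \emph{linear-recurrence sequence} $v(d_0,\ldots,d_j)$ in the gap variables (a linear recurrence in each variable separately), computable from $M$ and $\psi$.

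The core of the argument is then to determine, for a given constant $c$ (a component of the parameter $\alpha$), whether the set $\{(d_0,\ldots,d_j) : v(\vec d) \bowtie c\}$ is recognizable. For the order predicates (${\bowtie}\in\{<,\le,>,\ge\}$), the sign of $v(\vec d)-c$ is eventually governed by the dominant-eigenvalue terms of $v$, so the relation is ultimately periodic in each gap variable with a computable period, \emph{except} when $c$ equals one of the finitely many ``limit values'' of $v$ along the faces where some gaps tend to infinity; those exceptional $c$ form a computable finite set. For the equality predicate (${\bowtie}$ being $=$), by the Skolem--Mahler--Lech theorem the zero set of $v(\vec d)-c$ is a finite union of coset-like pieces together with a ``sporadic'' finite part, and the sporadic part is uncontrolled only when $c$ lies among the countably many values attained by $v$ on its eventually periodic skeleton or among its limit values. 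The quantitative heart of the proof is to show that the set $H_\phi$ of all such bad $c$, taken over all probability atoms, all regions and all orderings, has measure zero and, crucially, can be \emph{effectively over-approximated}: given $\epsilon>0$ one computes a finite union of rational boxes of total measure at most $\epsilon$ covering it. Outside $H_\phi$ one then obtains a \emph{computable} threshold beyond which the truth value of each atom $v(\vec d)\bowtie \alpha_i$ is periodic in the gaps; hence each atom defines an effectively recognizable relation, and feeding these into the WS1S automaton construction over the product chain decides whether $\I(M)\models\phi_\alpha$.

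The main obstacle is exactly the entanglement with the Skolem problem: pinning down the sporadic part of the zero set, or the first index at which a linear-recurrence sequence changes sign relative to $c$, is not known to be computable — indeed the later results of this paper show a genuine undecidability barrier sits nearby. The entire force of the theorem is to route around this: by deleting a small-measure set of constants one deletes precisely the instances where the sporadic or near-threshold behaviour can occur, leaving only constants for which the relevant sequences are ``eventually tame'' with computable transients. Making the transient/period bounds \emph{uniform and computable} outside $H_\phi$ (rather than merely existential), and obtaining the effective measure estimate on $H_\phi$, is the delicate part of the argument.
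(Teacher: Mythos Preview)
This theorem is not proved in the paper: it is quoted from \cite{BRS06} and stated without proof, followed only by a one-sentence intuition (``the complement of $H_\phi$ contains the points that are bounded away from limit points of the Markov chain, and comparisons can be decided using convergence properties''). There is therefore nothing in the paper's own argument to compare your proposal against.

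That said, your sketch is broadly consistent with how the paper \emph{describes} the approach of \cite{BRS06}: the discussion in Section~\ref{sec:discussion} explicitly notes that the decision algorithm exploits the decomposition of $\phi(t_1,\ldots,t_n)$ (for ordered $t_i$) into independent past- and future-time pieces in the gap variables, yielding products of single-time probability terms, which matches your linear-recurrence-in-the-gaps picture; and the one-line intuition after the theorem statement matches your description of $H_\phi$ as the constants too close to the limit values of these recurrences. One caution: your treatment of the equality case leans on Skolem--Mahler--Lech to describe the zero set structurally, but the whole point of excluding $H_\phi$ is precisely to sidestep the non-effectiveness there; the paper's description suggests \cite{BRS06} handles all $\bowtie$ uniformly via the ``bounded away from limit points'' idea rather than invoking Skolem--Mahler--Lech separately for equality.
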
 

Intuitively, the complement of $H_\phi$ contains the points that are bounded away from 
limit points of the Markov chain, and comparisons can be decided using 
convergence properties.

The reason for excluding the set $H_\phi$ is that the limit point cases seem to 
require a resolution of problems related to the \emph{Skolem problem} concerning zeros of linear recurrences \cite{Skolem34}. 
A sequence of real numbers $\{u_n\}$ is called a linear recurrence sequence (LRS) of order $k$ if 
there exist $a_1, \ldots a_k$ with $a_k \neq 0$ such that  
for all $m \geq 1$, 
\[ u_{k+m} = a_1u_{k+m-1} + a_2u_{k+m-2}+\dotsb+ a_ku_m~.\]
We consider the following decision problems associated with a LRS $\{u_n\}$. 
\be
\item
{\bf Skolem problem.} Does there exist $n$ such that $u_n =0$?
\item
{\bf Positivity problem.} Is it the case that for all $n$, $u_n \geq 0$?
\item 
{\bf Ultimate positivity problem.} Does a positive integer $N$ exist such that for all $n \geq N$, $u_n \geq 0$?
\ee
We will deal with sequences with rational entries. By clearing denominators the rational version of the above problems can be shown to be polynomially equivalent to 
similar problems stated using sequences with integer entries. 
There has been a significant amount of work on these problems \cite{Igorbook}, but they have stood unresolved since the 1930's. 
To date, only low order versions of these problems have been shown to be decidable \cite{Halava05,OW14,TMS84}. 

The above problems have an equivalent matrix formulation. A proof of the following can be found in \cite{Halava05}.
\begin{lemma}
For a sequence $u_0, u_1,\dotsc,$ the following are equivalent. 
\be 
\item
$\{u_n\}$ is a rational LRS.  
\item
For $n\geq 1$, $u_n = (A^n)_{1k}$ for a square matrix $A$ with rational entries. 
\item
For $n\geq 1$, $u_n= \ve{v}^TA^n\ve{w}$ where $A$ is a square matrix, and $\ve{v}$ and $\ve{w}$ are vectors with entries from $\{0,1\}$.
\ee 
\end{lemma}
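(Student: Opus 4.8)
The plan is to prove the cycle (1) $\Rightarrow$ (2) $\Rightarrow$ (3) $\Rightarrow$ (1). The implication (2) $\Rightarrow$ (3) is immediate: the standard basis vectors $e_1$ and $e_k$ have entries in $\{0,1\}$, so if $u_n = (A^n)_{1k}$ then $u_n = e_1^T A^n e_k$, and (3) holds with $\ve v = e_1$, $\ve w = e_k$.

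For (1) $\Rightarrow$ (2), I start from a recurrence $u_{k+m} = a_1 u_{k+m-1} + \dots + a_k u_m$ holding for all $m \ge 1$ with $a_k \neq 0$, and form the $k \times k$ companion matrix $C$ of the polynomial $x^k - a_1 x^{k-1} - \dots - a_k$. The key point is that the window vectors $\ve x_m = (u_m, u_{m+1}, \dots, u_{m+k-1})^T$ satisfy $\ve x_{m+1} = C \ve x_m$, so $u_n$ is obtained by applying a fixed linear functional to a fixed power of $C$ applied to a fixed vector, namely $u_n = e_1^T C^{\,n-1} \ve x_1$. To turn this into a single entry of a single matrix power, I conjugate $C$ by a suitable invertible rational matrix that carries $e_k$ onto a scalar multiple of $\ve x_1$ while preserving the functional $e_1^T$ — bordering $C$ with one additional coordinate in the case where these two requirements would clash on the $(1,k)$ position — and that also absorbs the index shift $n-1 \mapsto n$, using that $C$ is invertible since $a_k \neq 0$. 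This produces a rational matrix $A$ with $u_n = (A^n)_{1k}$ for all $n \ge 1$.

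For (3) $\Rightarrow$ (1), let $u_n = \ve v^T A^n \ve w$ with $A$ a rational $\ell \times \ell$ matrix, and let $\chi(x) = x^\ell - c_1 x^{\ell-1} - \dots - c_\ell$ be the characteristic polynomial of $A$; since the $c_i$ are, up to sign, polynomials in the entries of $A$, they are rational. By the Cayley--Hamilton theorem $\chi(A) = 0$, hence $A^{n+\ell} = c_1 A^{n+\ell-1} + \dots + c_\ell A^n$ for every $n \ge 0$, and multiplying on the left by $\ve v^T$ and on the right by $\ve w$ gives $u_{n+\ell} = c_1 u_{n+\ell-1} + \dots + c_\ell u_n$ for all $n \ge 1$, exhibiting $\{u_n\}$ as a rational linear recurrence sequence. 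When $0$ is an eigenvalue of $A$, so that $c_\ell = 0$, one first divides $\chi$ by the largest power of $x$ dividing the minimal polynomial of $A$ and shifts the index range accordingly, recovering a recurrence whose last coefficient is nonzero.

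I expect the main obstacle to be exactly these two normalizations: arranging the conjugation in (1) $\Rightarrow$ (2) so that the sequence sits in the top-right corner with the correct initial segment, and extracting a recurrence with nonzero last coefficient in (3) $\Rightarrow$ (1) when $A$ is singular. Neither is conceptually deep, and a careful treatment of the matrix formulation can be found in \cite{Halava05}.
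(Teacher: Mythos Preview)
The paper does not give its own proof of this lemma: the only justification offered is the sentence ``A proof of the following can be found in \cite{Halava05}.'' Your sketch follows exactly the standard companion-matrix / Cayley--Hamilton route that one finds in such references, and you likewise point to \cite{Halava05} for a careful treatment; so there is no divergence to discuss.

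One small caution, which you already partly anticipate. The paper's definition of an LRS of order $k$ explicitly demands $a_k\neq 0$ and that the recurrence hold for all $m\ge 1$. In your (3)$\Rightarrow$(1) step, when $A$ is singular your proposed fix --- dividing the minimal polynomial by its largest power of $x$ --- yields a shorter recurrence with nonzero last coefficient, but only valid from some shifted starting index, not necessarily from $m=1$. (Concretely, $A=\bigl(\begin{smallmatrix}0&1\\0&0\end{smallmatrix}\bigr)$, $\ve v=e_1$, $\ve w=e_2$ gives $u_1=1$, $u_n=0$ for $n\ge 2$, which satisfies no recurrence with $a_k\neq 0$ for all $m\ge 1$.) This is really a wrinkle in how the paper phrases the definition rather than in your argument, and the cited reference handles the bookkeeping; just be aware that the ``shift the index range accordingly'' step needs the slightly more permissive reading of the definition.
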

In the usual formulation of the Skolem, positivity and ultimate positivity problems,  
the associated matrices $A$ may contain  negative numbers, and numbers not in $[0,1]$,
so are not stochastic matrices.  
However, \cite{AAOW15} show that these problems can be reduced to a decision problem stated with 
respect to stochastic matrices:  
\begin{lemma} \label{red-A-B} 
Given an integer $k\times k$ matrix $A$, one can compute a $k'\times k'$ stochastic matrix $B$, 
a length $k'$ stochastic vector $\ve{v}$, a length $k'$ vector $\ve{w}= (0, \ldots, 0,1)$ and a constant 
$c$ such that $(A^n)_{1,k} = 0$ ($(A^n)_{1,k} >0$) iff $\ve{v}^T B^n \ve{w} = c$ (respectively, $\ve{v}^T B^n \ve{w} >c$).  
\end{lemma}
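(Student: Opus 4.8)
The plan, following \cite{AAOW15}, is to convert $A$ into a stochastic matrix in three stages --- first removing negative entries, then rescaling to a substochastic matrix, then completing to a stochastic one --- while keeping exact track of how the single entry $(A^n)_{1,k}$ is recovered at each stage, so that the predicates ``$(A^n)_{1,k}=0$'' and ``$(A^n)_{1,k}>0$'' are faithfully transferred. The guiding invariant is that $(A^n)_{1,k}$ will be expressed as a \emph{difference of two nonnegative entries} of the $n$-th power of a nonnegative matrix; this sign information is unaffected by the positive rescalings used to make that matrix (sub)stochastic, and the constant $c$ in the statement exists precisely to turn the (possibly negative) difference into a genuine probability that can be read off by the fixed covector $\ve w = (0,\dots,0,1)$. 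By the preceding matrix lemma we may assume the given linear recurrence already has the form $u_n = (A^n)_{1,k}$.

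\emph{Stage 1: eliminating negative entries.} Write $A = A^+ - A^-$ with $A^+,A^-$ the nonnegative and negated-negative parts of $A$ (both nonnegative integer matrices), and form the $2k\times 2k$ nonnegative matrix $\bar A = \left(\begin{smallmatrix} A^+ & A^- \\ A^- & A^+ \end{smallmatrix}\right)$. Conjugating the block structure by $S = \left(\begin{smallmatrix} I & I \\ I & -I\end{smallmatrix}\right)$, with $S^{-1} = \tfrac12 S$, diagonalises $\bar A$ as $S^{-1}\bar A S = \mathrm{diag}(A^+ + A^-,\, A^+ - A^-) = \mathrm{diag}(|A|,\,A)$, where $|A|$ denotes entrywise absolute value. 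Hence
$$ \bar A^n = \tfrac12 \left(\begin{smallmatrix} |A|^n + A^n & |A|^n - A^n \\ |A|^n - A^n & |A|^n + A^n\end{smallmatrix}\right), $$
so that $(A^n)_{1,k} = (\bar A^n)_{1,k} - (\bar A^n)_{1,2k}$, a difference of two nonnegative reals, each bounded above by $(|A|^n)_{1,k}$.

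\emph{Stages 2 and 3: rescaling and completion.} Let $M$ be an integer at least twice the largest row sum of $\bar A$ and set $\hat A = \bar A/M$; then $\hat A$ is nonnegative with all row sums at most $\tfrac12$, and $(\hat A^n)_{1,k} - (\hat A^n)_{1,2k} = (A^n)_{1,k}/M^n$ has the same sign as $(A^n)_{1,k}$ for every $n$. It therefore suffices to produce $B,\ve v$ and a rational $c$ with $\ve v^T B^n \ve w = c$ iff $(\hat A^n)_{1,k} = (\hat A^n)_{1,2k}$ and $\ve v^T B^n \ve w > c$ iff $(\hat A^n)_{1,k} > (\hat A^n)_{1,2k}$. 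First complete $\hat A$ to a stochastic matrix by adjoining an absorbing sink that, from each old state $j$, receives the missing mass $1 - \sum_l \hat A_{jl}$; with start state $1$ this keeps $(\hat A^n)_{1,l}$ as the probability of being at state $l$ at time $n$, for each $l\le 2k$. Then adjoin a bounded number of further auxiliary states, one of which is a designated target placed last (so that $\ve w=(0,\dots,0,1)$ selects it), wired so that mass reaches the target only through delayed, non-interfering copies of columns $k$ and $2k$ together with a stationary ``background'' contribution, in such a way that the probability $\ve v^T B^n\ve w$ of being at the target at time $n$ equals $c + \gamma\bigl[(\hat A^n)_{1,k} - (\hat A^n)_{1,2k}\bigr]$ for a fixed rational $c$ and a fixed $\gamma>0$, with the scalings chosen to keep this quantity in $[0,1]$. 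Combining with Stage 1 yields $\ve v^T B^n\ve w = c$ iff $(A^n)_{1,k}=0$ and $\ve v^T B^n\ve w>c$ iff $(A^n)_{1,k}>0$, as required.

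The delicate step, and the reason we simply quote \cite{AAOW15}, is the final construction: one must realise a \emph{difference} of two coordinates of the time-$n$ distribution, shifted by a constant offset, as the probability of occupying a \emph{single} state, while keeping the matrix stochastic and the correspondence exact for every $n$ rather than only asymptotically. The tension is that a reporter edge siphoning probability mass towards the target perturbs the underlying dynamics, whereas a genuinely constant offset must be supplied by a recurrent part of the chain, which in turn tends to make the reporter contributions accumulate over time; reconciling these two requirements is exactly what the gadget of \cite{AAOW15} achieves.
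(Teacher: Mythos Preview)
The paper does not prove this lemma at all: it is stated with attribution to \cite{AAOW15} and used as a black box. Your proposal is therefore strictly more informative than the paper's treatment, and the sketch you give of Stages~1 and~2 (the $A^+/A^-$ block doubling to eliminate signs, followed by rescaling to a substochastic matrix and completion with a sink) is correct; in particular your computation $(A^n)_{1,k} = (\bar A^n)_{1,k} - (\bar A^n)_{1,2k}$ via the conjugation $S^{-1}\bar A S = \mathrm{diag}(|A|,A)$ is right.

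Where your write-up is honest rather than complete is Stage~3: you assert the existence of a gadget producing $\ve v^T B^n \ve w = c + \gamma\bigl[(\hat A^n)_{1,k} - (\hat A^n)_{1,2k}\bigr]$ for every $n$, but do not construct it, and you correctly flag this as the nontrivial point and defer to \cite{AAOW15}. Since the paper itself defers the entire lemma to that reference, your treatment is fully aligned with (indeed, an expansion of) the paper's; there is no discrepancy to report.
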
 
As noted in  \cite{AAOW15}, it follows that the logic $\WMLOP$ is 
able to express the Skolem and positivity problems, 
using model checking questions of the form 
$$ M \models \exists t (\prob(p(t)) = c)$$ 
and
$$ M \models \exists t (\prob(p(t)) > c)$$ 
for $c$ a nonzero constant and $M$ a DTMC. 
(The ultimate positivity problem can also be expressed.) 
It is worth noting that in the special case of the constant $c=0$, 
these model checking questions \emph{are} decidable, 
as shown in \cite{BRS06}. (Essentially, in this case the problems 
reduce to  graph reachability problems, and the specific 
probabilities in $M$ are irrelevant.) The transformation from 
arbitrary matrices $A$ to stochastic matrices $B$ in Lemma~\ref{red-A-B}
requires that the constant $0$ of the Skolem problem be replaced by a non-zero constant $c$.  

The above model checking problems of the quantified logic $\WMLOP$ can be seen to be already expressible in the propositional logic $\PLTLsK$, as 
the problems 
\begin{subequations}\label{eq:formula_Skolem}
\begin{equation*}
M' \models^\clk \mathbf{AF} (\pr_i(p) =c) 
\end{equation*}
\begin{equation*}
M' \models^\clk \mathbf{AF} (\pr_i(p) > c)  
\end{equation*}
\begin{equation*}
M' \models^\clk\mathbf{AF} \mathbf{AG}(\pr_i(p) > c )
\end{equation*}
\end{subequations}
where we obtain the PO-DTMC $M'$ from the DTMC $M$ 
by defining $O_i(s) = \bot$ for all states $s$. 
That is, agent $i$ is blind, so considers all states 
reachable at time $n$ to be possible at time $n$. 
(We remark that this implies that all the operators $A$ can be exchanged with $E$ 
without change of meaning of the formulas.) 
It follows, that with respect to clock semantics, 
a resolution of the decidability of model checking even these simple 
formulas of $\PLTLsK$ for \emph{all} $c\in [0,1]$ would imply a resolution of the Skolem problem. 
In view of the effort already invested in the Skolem problem, this is 
likely to be highly nontrivial.

\subsection{Perfect Recall Semantics} 

Model checking with respect to the perfect recall semantics is undecidable, even with respect to a very simple 
fixed formula of the logic $\PLTLsK$, as shown by the following result. 

\begin{theorem} \label{thm:pr} 
The problem of determining, given a PO-DTMC $M$, if  $M \models^\spr EF (\prob_i(p)> c)$, for $p$ an atomic proposition, is 
undecidable. 
\end{theorem}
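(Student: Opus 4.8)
The plan is to reduce an undecidable problem about probabilistic finite automata to the model-checking problem. The natural candidate is the \emph{emptiness} (or \emph{value-$\geq\!c$}) problem for probabilistic finite automata (PFA): given a PFA $\PFA$ over an alphabet $\PAalph$ with rational transition probabilities, an initial distribution and a set of accepting states, and a threshold $c$, it is undecidable whether there exists a finite input word $w$ whose acceptance probability is $>c$ (Paz; see also the strict/non-strict cut-point results of Bertoni and others). The key observation that makes this applicable here is that, under the synchronous perfect recall semantics, $\prob_i(p)$ at a point $(r,m)$ is precisely the conditional probability that $p$ holds, given the \emph{sequence of observations} $O_i(\rho(0))\ldots O_i(\rho(m))$ that agent $i$ has seen; this is a quantity that behaves like a forward probability vector being updated by stochastic matrices indexed by the letters read, which is exactly the PFA dynamics. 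So I would build $M$ so that the observation history of agent $i$ plays the role of the input word, and the hidden environment state evolves as the PFA state.

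Concretely, first I would set up $M$ as follows. The states of $M$ encode (PFA state, last letter read) pairs, together with the letter about to be read being chosen by the Markov transition: from a state encoding PFA-state $q$, the chain picks a letter $a \in \PAalph$ (say uniformly, or with any fixed nonzero rationals) and then moves to the PFA-successor state drawn according to $\PAtrans(q,a)$. Agent $i$'s observation at a state reveals the letter just read (so the observation history over $[0..m]$ is exactly the input word $a_1\ldots a_m$), but \emph{not} the PFA state. The proposition $p$ is made true exactly at the accepting PFA states. Then, by the semantics of $\prob_i$ in $\I^\spr(M)$ together with Lemma~1 (which tells us the relevant conditioning sets are measurable with positive measure), one computes that $[\prob_i(p)]_{\I^\spr(M),(r,m)}$ equals the probability that the PFA is in an accepting state after reading $a_1\ldots a_m$, conditioned on that word being generated — but since the word-generating probabilities are the same fixed nonzero rationals regardless of PFA state, this conditioning does not distort the ratio, and the value is exactly the PFA acceptance probability of $a_1\ldots a_m$ (possibly up to a harmless normalization that can be absorbed into $c$). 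Hence $M \models^\spr EF(\prob_i(p) > c)$ holds iff \emph{some} finite word is accepted by $\PFA$ with probability $>c$, which is undecidable.

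The main obstacle I anticipate is getting the conditioning arithmetic exactly right so that $[\prob_i(p)]$ comes out equal to the PFA acceptance probability and not some skewed version of it. The subtlety is that $\prob_i(p)$ conditions on the \emph{full} observation history, and one must check that the environment's state distribution given that history is precisely the PFA's state distribution after the corresponding word — this requires that the letter-choice probabilities be state-independent (so that Bayesian updating on the observed letters reduces to the PFA's matrix-multiplication update), and one must be careful about the very first step (the initial distribution $PI$ must encode the PFA's initial distribution correctly, and one must decide whether agent $i$ observes anything at time $0$). A secondary point to handle cleanly is the strict inequality: the PFA value problem with a strict cut-point $>c$ for $c \in (0,1)$ rational is the version known to be undecidable, and the formula uses $>c$, so these match; if a non-strict version were needed one would instead invoke the corresponding non-strict undecidability result. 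Finally, I would remark that only a single fixed agent and a single atomic proposition are used, and that the same construction localizes the undecidability to $EF$ of a $\prob_i$-atom, matching the statement; the contrast with the clock-semantics case (Theorem~\ref{thm:brs}) is precisely that here the unbounded observation history gives the agent unbounded "memory", which is what lets the PFA word be encoded.
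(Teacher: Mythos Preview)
Your proposal is correct and follows essentially the same approach as the paper: reduce from the emptiness problem for probabilistic finite automata by taking states $(q,a)\in \PAS\times\PAalph$, choosing the next letter uniformly (state-independently) and the next PFA state according to $\PAtrans$, letting agent $i$ observe only the letter, and setting $p$ true at accepting PFA states. The paper carries out exactly the conditioning computation you anticipate, showing that $[\prob_i(p)]_{\I^\spr(M),(r,m)}$ equals the PFA acceptance probability of the word $r_i(1)\ldots r_i(m)$ on the nose (no renormalization of $c$ is needed), with the observation at time $0$ being a dummy letter that does not enter the product.
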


\begin{proof} 
By reduction from the emptiness problem for probabilistic finite automata \cite{Paz71}. 
Intuitively, the proof sets up an association between words in the matrix semigroup and sequences of observations of the agent. 

A probabilistic finite automaton is a tuple $\PFA = (\PAS,\PAalph,\PAinit,\PAtrans, \PAfinal, \lambda)$, 
where   $\PAS$ is a finite set of states, $\PAalph$ is a finite alphabet, $\PAinit:\PAS \rightarrow [0,1]$ is a probability distribution 
over states, representing the initial distribution, $\PAtrans: \PAalph \rightarrow (\PAS\times\PAS \rightarrow[0,1])$ 
associates a transition probability matrix $\PAtrans(a)$ with each letter $a\in \PAalph$, component $\PAfinal \subseteq \PAS$ is a 
set of \emph{final} states, and 
$\lambda \in (0,1)$ is a rational number. Each matrix $\PAtrans(a)$ satisfies $\sum_{t\in S} \PAtrans(a)(s,t) = 1$ for all $s\in \PAS$. 
Let $v_\PAfinal$ be the column vector indexed by $\PAS$  with $v_\PAfinal(s) = 0$ if $s\not \in F$ and 
$v_\PAfinal(s) = 1$ if $s \in F$. Treating $\PAinit$ as a row vector, 
for each word $w = a_1 \ldots a_n \in \PAalph^+$, define $f(w) = \PAinit \PAtrans(a_1) \ldots  \PAtrans(a_n) v_\PAfinal$. 
The language accepted by the automaton is defined to be ${\cal L}(\PFA) =  \{w\in \PAalph^+~|~ f(w) > \lambda\}$.  
The emptiness problem for probabilistic finite automata is then, given a probabilistic finite automaton $\PFA$, 
to determine if the language ${\cal L}(\PFA)$ is empty.  
This problem is known to be undecidable \cite{Paz71,CondonLipton89}. 

Given a probabilistic finite automaton  $\PFA$, we construct 
an interpreted finite PO-DTCM $M_\PFA$ for a single agent (called $i$ rather than $1$ to avoid confusion with other numbers) such that $M_\PFA \models^\spr EF (\prob_i(p)> \lambda)$
iff $\PFA$ is nonempty. This system is defined as follows. We let $N = |\PAalph|$, 
\be
\item $S = \PAS \times \PAalph$, 
\item $PI(q,a) = \mu_0(q)/N$, 
\item  $PT((q,a), (q',b)) = \PAtrans(b)(q,q')/N$ 
\item $O_i((q,a)) = a$ 
\item $p \in \pi((q,a))$ iff  $q\in \PAfinal$.
\ee 
Note that $\sum_{(q,a) \in S} PI((q,a)) = \sum_{a\in \PAalph}\sum_{q\in \PAS} \mu_0(q)/N = \sum_{a\in \PAalph} 1/N = 1$, 
so $PI$ is in fact a distribution. Similarly, for each $(q,a) \in S$, we have 
$\sum_{(q',b)\in S} PT((q,a), (q',b)) = \sum_{b\in \PAalph}\sum_{q'\in \PAS}\PAtrans(b)(q,q')/N =  \sum_{b\in \PAalph} 1/N = 1$, 
so $PT$ is in fact a stochastic matrix. 

\newcommand{\B}{{\cal B}}

Note that for each $w = a_1\ldots a_n\in \PAalph^*$ and $a\in \PAalph$, we get a row vector 
$\mu_{aw} = \mu_0 \PAtrans(a_1) \ldots \PAtrans(a_n)$ with $\sum_{q\in \PAS} \mu_{aw}(q) =1$, 
which can be understood as a distribution on $\PAS$. For each run $r \in \R^\spr(M_\PFA)$
and $m\geq 0$, we have  that that agent $i$'s local state $r_i[0\ldots m]$ at $(r,m)$ is a word in $\PAalph^+$. 
Let $\B(q,m)$ be the set of runs $r\in  \R^\spr(M_\PFA)$ in which $r_e(m) = (q,a)$ for some $a\in \PAalph$. 
We claim the following about the 
probability measures $\mu_{r,m,i}$ in the probabilistic interpreted system $\I^\spr(M_\PFA)$, for each point $(r,m)$ and $q\in \PAS$: 
$$ \mu_{r,m,i}(\K_i(r,m)(\B(q,m))) = (\PAinit \PAtrans(r_i(1)) \ldots \PAtrans(r_i(m)))(q)~.$$
It is immediate from this that $\I^spr(M_\PFA) ,(r,m) \models \prob_i(p) = c$ where $c = f(r[1\ldots m])$, and the 
desired result follows.

For the proof of the claim, 
let $\mu$ be the measure on $\R^\spr(M_\PFA)$ in $\Pspace(M_\PFA)$. 
Note first that $\mu(\R^\spr(\K_i(r,m))) = 1/N^{m+1}$ for all runs $r$. 
The proof of this is by induction on $m$. Let $r_i(m) = a_0 \ldots a_m$. 
For $m=0$, we have $\mu(\R^\spr(\K_i(r,m)))  = \sum_{q\in \PAS} \PAinit(q)/N = 1/N$. 
For $m = k+1$, 
\begin{align*} 
& \mu(\R^\spr(\K_i(r,m))) \\
&  = \mu( \{ r'|~ r'_i[0\ldots k] = a_0\ldots a_k,~ r_e(k) = (q,a_k), ~r_e(k+1) = (q',a_{k+1})\}) \\ 
& = \sum_{q\in \PAS}\sum_{q'\in \PAS} \mu( \{ r'|~ r'_i[0\ldots k] = a_0\ldots a_k,~ r_e(k) = (q,a_k)\}) \cdot PT((q,a_k),(q',a_{k+1}) \\ 
& = \sum_{q\in \PAS} \mu( \{ r'|~ r'_i[0\ldots k] = a_0\ldots a_k,~ r_e(k) = (q,a_k)\}) \cdot \sum_{q'\in \PAS} PT((q,a_k),(q',a_{k+1}) \\
& = \sum_{q\in \PAS} \mu( \{ r'|~ r'_i[0\ldots k] = a_0\ldots a_k,~ r_e(k) = (q,a_k)\}) \cdot \sum_{q'\in \PAS} \PAtrans(a_{k+1})(q,q')/N\\
& = \sum_{q\in \PAS} \mu( \{ r'|~ r'_i[0\ldots k] = a_0\ldots a_k,~ r_e(k) = (q,a_k)\}) \cdot 1/N\\
& =  \mu( \{ r'|~ r'_i[0\ldots k] = a_0\ldots a_k\}) \cdot 1/N\\
& = 1/N^{m+1} 
\end{align*} 
with the last step by induction. 
 
For the main claim, we again proceed by induction on $m$. 
For the case $m = 0$, let $r_e(0)  = (q,a)$ 
we have $(r,m) \sim_i (r'm')$ iff $r'_e(0) = (q',a)$ for some $q'$, 
so $\K_i(r,m) =  \{r'~|~ r'_e(0) = (q',r_i(0))\text{ for some $q'\in \PAS$}\}$. 
Hence 
\begin{align*} 
& \mu_{r,m,i}(\K_i(r,m)(\B(q,m))) \\
&  =
\mu( \R^\spr(\K_i(r,m)(\B(q,m)))) / \mu(\R^\spr(\K_i(r,m))) \\
 & = \mu( \{r'~|~ r'_e(0) = (q,r_i(0))\}) / \mu( \{r'~|~ r'_e(0) = (q',r_i(0)) \text{ for some $q'\in \PAS$}\})  \\
 & =  PI(q,r_i(0)) / PI( \{(q',r_i(0))  ~|~q'\in \PAS\})\\ 
& = (\PAinit(q)/N)/(1/N) \\
& = \PAinit(q)
\end{align*}
as required. 

For $m = k+1$, note 
\begin{align*} 
& \mu_{r,k+1,i}(\K_i(r,k+1)(\B(q,k+1))) \\ 
&  =
\mu( \R^\spr(\K_i(r,k+1)(\B(q,k+1)))) / \mu(\R^\spr(\K_i(r,k+1))) \\
 & = N^{k+2} \cdot \mu( \R^\spr(\K_i(r,k+1)(\B(q,k+1)))
\end{align*}
by the above fact. 
Now, with $r_i[0\ldots k+1] = a_0 \ldots a_{k+1}$, we have 
\begin{align*}
& \mu( \R^\spr(\K_i(r,k+1)(\B(q,k+1))) \\
 & = 
\mu( \{r'~|r'_i[0\ldots k+1] = a_0 \ldots a_{k} , ~ r'_e(k+1) = (q,a_{k+1}) \}) \\ 
& = \sum_{q'\in \PAS}  \mu( \{r'~|r'_i[0\ldots k] = a_0\ldots a_k, r'_e(k) = (q', a_k), ~ r'_e(k+1)= (q,a_{k+1}) \}) \\ 
& = \sum_{q'\in \PAS}  \mu( \{r'~|r'_i[0\ldots k] = a_0\ldots a_k, r'_e(k) = (q', a_k)\}) \cdot PT((q',a_k), (q,a_{k+1})) \\ 
& = \sum_{q'\in \PAS}  \mu( \{r'~|r'_i[0\ldots k] = a_0\ldots a_k, r'_e(k) = (q', a_k)\}) \cdot \PAtrans(a_{k+1})(q',q)/N \\ 
& = \sum_{q'\in \PAS}  \mu( \{r'~|r'_i[0\ldots k] = a_0\ldots a_k, r'_e(k) = (q', a_k)\}) \cdot \PAtrans(a_{k+1})(q',q)/N \\ 
& = \sum_{q'\in \PAS}  (\PAinit \PAtrans(a_1) \ldots \PAtrans(a_m))(q') \cdot \PAtrans(a_{k+1})(q',q)/N^{k+1}N  \quad\quad\text{(by induction)} \\ 
& = \PAinit \PAtrans(a_1) \ldots \PAtrans(a_m) \PAtrans(a_{k+1})(q)/N^{k+2}   \\ 
\end{align*} 
Combining this with the previous equation, we have 
$\mu_{r,k+1,i}(\K_i(r,k+1)(\B(q,k+1)))  =\PAinit \PAtrans(a_1) \ldots \PAtrans(a_m) \PAtrans(a_{k+1})(q)$ 
as required.  
\end{proof} 

We remark that this result stands in contrast to the situation for model checking the 
logic of knowledge and time. Write $\mathbf{CLTL^*K}$ for the logic  obtained
from $\PLTLsK$ by omitting the probability comparison atoms $f(P_1, \ldots,P_k) \bowtie c$. 
Model checking the logic $\mathbf{CLTL^*K}$ with respect to perfect recall, 
i.e., deciding $M \models^\spr \phi$ for $M$ a PO-DTMC and $\phi$ a formula 
is decidable \cite{MeydenShilov}. (Here, for the semantic structures $M$, 
it suffices to replace the initial distribution $PI$ in $M$ by the set $I = \{s\in S~|~ PI(s)>0\}$, 
and replace the transition distribution function $PT$ in $M$ by the 
relation $R$ of possibility of transitions between states defined by $sRt$ if $PT(s,t)>0$.
The results in \cite{MeydenShilov} use linear time temporal logic as a basis, 
but, as noted in \cite{MeydenWong03}, the modality $A$ of the branching time logic $CTL^*$ can 
be understood as a special case of a knowledge modality: see Proposition~\ref{prop:trans}.) 
 
For probabilistic automata the minimum size of the state space giving undecidability directly stated in the literature appears to be 25 \cite{Hirvensalo06}.
We remark that the proof of Theorem~\ref{thm:pr}  can also be done by reduction of the following matrix semigroup problem: 
{\em given a finite set of matrices of order $n$, generating  a matrix semigroup $S$, determine whether there is $M\in S$ such that $(M)_{1n} =0$} \cite{Halava97}. 
The case of $k$ generators of size $n\times n$ can be reduced to probabilistic automata with $2kn+1$ states. 
Recent results on the matrix semigroup problem are given in \cite{CHHN14}.

Huang et al \cite{HSZ12} have previously used a reduction from probabilistic automata to 
show undecidability of an probabilistic epistemic logic with respect to perfect recall. 
Compared to our simple CTL temporal operators, their logic uses more expressive setting of 
alternating temporal logic  operators. 

\subsection{Clock Semantics} 

The undecidability of the perfect recall semantics for such simple formulas suggests that
we weaken the epistemic semantics to the clock case. The combination of 
the translation from $\PLTLsK$ to $\WMLOKP$ (Proposition~\ref{prop:trans}) 
and Theorem~\ref{thm:brs} then enables some cases of $\PLTLsK$ to be decided. 
We do not obtain a full decidability result, however, since we face the problem that, with respect to the clock semantics, 
the formula  $AF (\prob_i(p)=c)$ can express the Skolem problem, so resolving 
its decidability is a very difficult problem. Rather than attempt to resolve this question, 
we consider here just how much extra expressiveness is required
over the logic of Theorem~\ref{thm:brs} for us to obtain a definitive {\em undecidability} result, 
instead of a decidability result with some excluded and unresolved cases. 

Note that one of the restrictions on $\WMLOP$ used in Theorem~\ref{thm:brs} is 
that second order quantification should not cross into probability terms. It turns out 
that this restriction is essential, as shown by the following result. 

\begin{theorem} \label{thm:quantinundec}
It is is undecidable, given a PO-DTMC $M$ and a formula $\phi$ of $\WMLOKP$ with linear combinations of probability 
terms $\prob(\phi)$ and quantifying-in of second-order quantifiers, whether $M\models \phi$.  
\end{theorem}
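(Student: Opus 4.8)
The plan is to reduce from the emptiness problem for probabilistic finite automata, reusing verbatim both the PO-DTMC construction $M_\PFA$ and the probability computations carried out in the proof of Theorem~\ref{thm:pr}. Given a probabilistic finite automaton $\PFA=(\PAS,\PAalph,\PAinit,\PAtrans,\PAfinal,\lambda)$, I take $M_\PFA$ to be exactly the single-agent PO-DTMC constructed there (so $O_i((q,a))=a$ and $p\in\pi((q,a))$ iff $q\in\PAfinal$), assuming without loss of generality that it is equipped with the observation atoms $\obs_{i,j}$. Writing $\kappa_i(X_1,\ldots,X_{k_i},t)$ for the formula $\forall t'(t'\leq t\rimp\bigwedge_{j=1}^{k_i}(X_j(t')\dimp\obs_{i,j}(t')))$ exactly as in Proposition~\ref{prop:qi}, the reduction produces the $\WMLOKP$ sentence
$$\phi_0\;:=\;\exists t\,\exists X_1\ldots X_{k_i}\Bigl(\exists s\,(s<t)\;\land\;\prob\bigl(\kappa_i(X_1,\ldots,X_{k_i},t)\land p(t)\bigr)\;>\;\lambda\cdot\prob\bigl(\kappa_i(X_1,\ldots,X_{k_i},t)\bigr)\Bigr)~.$$
Note that $\phi_0$ uses only first-order temporal quantification, a single two-term \emph{linear} comparison of global probability terms $\prob(\cdot)$, and second-order quantification whose variables $X_j$ occur inside the scope of $\prob$; it contains no knowledge operators and no agent-probability terms, so it lies in the fragment named in the statement, and is in fact an extension of $\WMLOP$ by precisely these two features. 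I would then show that $M_\PFA\models\phi_0$ if and only if ${\cal L}(\PFA)\neq\emptyset$.

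For correctness, first observe that $\phi_0$ is a sentence all of whose non-logical atoms occur inside $\prob$, so its truth is independent of the evaluation run and, under the bijections between $\infinitepaths{M_\PFA}$ and $\R^x(M_\PFA)$, independent of the epistemic semantics $x\in\{\spr,\clk\}$; hence ``$M_\PFA\models\phi_0$'' is well defined, as the statement presupposes. Next fix an assignment with $t\mapsto n$ for some $n\geq 1$. If the sets $X_1,\ldots,X_{k_i}$ single out, for each $t'\leq n$, exactly one index $j$, then they encode a letter word $a_0a_1\ldots a_n\in\PAalph^{\,n+1}$, and this word is a realizable observation prefix of $M_\PFA$, since $\PAinit$ is a distribution and each $\PAtrans(a)$ is stochastic. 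For such an assignment the computations in the proof of Theorem~\ref{thm:pr} yield that $\prob(\kappa_i(X_1,\ldots,X_{k_i},t))$ evaluates to $1/N^{\,n+1}$ and $\prob(\kappa_i(X_1,\ldots,X_{k_i},t)\land p(t))$ evaluates to $f(a_1\ldots a_n)/N^{\,n+1}$, so the inner comparison holds iff $f(a_1\ldots a_n)>\lambda$; for every other choice of the $X_j$ both probabilities are $0$ and the comparison fails. Ranging over all $n\geq 1$ and all $X_j$, we conclude that $\phi_0$ holds exactly when $f(w)>\lambda$ for some nonempty word $w$, i.e.\ exactly when ${\cal L}(\PFA)\neq\emptyset$.

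The work here is bookkeeping rather than conceptual; the two points that need care are (i) that the degenerate assignments, for which $\prob(\kappa_i(X_1,\ldots,X_{k_i},t))=0$ because the $X_j$ do not describe a genuine observation pattern, cause no harm, since $0\not>\lambda\cdot 0$, and (ii) the off-by-one between observation prefixes of length $n+1$ and accepted words of length $n$, which is why $t$ is forced to be positive (the empty word being excluded from ${\cal L}(\PFA)$). An alternative, more black-box route would start from the undecidable problem $M\models^\spr EF(\prob_i(p)>c)$ of Theorem~\ref{thm:pr}, translate it into $\WMLOKP$ by Proposition~\ref{prop:trans} (producing a sentence in the operators $K_{\top,t}$ and $\prob_{i,t}$), and then apply Proposition~\ref{prop:qi} to eliminate these operators in favour of the universal modality $\Box$ and linear comparisons of $\prob(\cdot)$ terms with second-order quantifying-in; the formula so obtained is again semantics-independent, but retains the modality $\Box=K_{\bot,t}$, whereas the direct reduction above yields a formula with no epistemic operators at all.
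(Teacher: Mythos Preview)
Your proposal is correct. The paper's own proof is exactly the two-line black-box argument you describe as the ``alternative'': it simply invokes Proposition~\ref{prop:qi} (second-order quantifying-in suffices to express perfect-recall agent probabilities via global $\prob(\cdot)$ terms) together with Theorem~\ref{thm:pr} (perfect-recall model checking of $EF(\prob_i(p)>c)$ is undecidable), and stops there.

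Your primary argument is an explicit unfolding of that same composition: you reuse the PO-DTMC $M_\PFA$ from Theorem~\ref{thm:pr} and the encoding $\kappa_i$ from Proposition~\ref{prop:qi}, and write down the resulting sentence $\phi_0$ directly. The one genuine refinement is that by going straight to the existential over observation patterns rather than first translating $EF$ through $K_{\top,t}$, you avoid introducing the universal modality $\Box$, so your $\phi_0$ contains no epistemic operators at all. This matches, and in fact sharpens, the paper's remark after the theorem that ``epistemic operators are not required'' and that the result is really about an extension of $\WMLOP$. Your handling of the two delicate points---degenerate $X_j$-assignments (both probabilities vanish, so the strict inequality fails) and the off-by-one forcing $t\geq 1$ to exclude the empty word---is correct.
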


\begin{proof} 
This follows from the fact that, using second order quantifying-in, we can express 
perfect recall (Proposition~\ref{prop:qi}), and the undecidability of model checking perfect recall (Theorem~\ref{thm:pr}). 
\end{proof} 

Note that the result refers to $\models$ rather than $\models^\clk$, since epistemic operators are
not required. This is really a result about a generalization of $\WMLOP$.
One of the other restrictions in Theorem~\ref{thm:brs} 
is that only simple probability comparisons of the form 
$\prob(\phi) \bowtie c$ are permitted. 
More general comparisons of  probability terms are needed in applications (see discussion in Section~\ref{sec:discussion}), so it is of interest to study their 
impact on decidability. Unfortunately, it turns out to be quite negative. Even the 
simple case of mixed time polynomial atomic probability formulas is enough for undecidability. 

\begin{theorem} \label{thm:mixedtime} 
There exists  a fixed PO-DTMC $M$ with 4 states such that it is undecidable, 
given a mixed-time polynomial  atomic probability formula  $\psi$, whether $M \models \psi$.   
\end{theorem}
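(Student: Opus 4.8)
The plan is to reduce from Hilbert's Tenth Problem: it is undecidable whether a Diophantine equation $D(x_1,\ldots,x_k)=0$ with integer coefficients has a solution in $\Nat^k$, and this remains so for a suitable universal family of polynomials $D$. It will be convenient to pass through the equivalent undecidability of \emph{exponential} Diophantine solvability, since a fixed finite Markov chain naturally supplies exponential ``clock'' values but cannot supply unbounded ones.

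First I would fix $M$ with four states: a pair $q,q'$ with $PT(q,q)=\tfrac12$, $PT(q,q')=\tfrac12$, $PT(q',q')=1$, and a pair $r,r'$ with $PT(r,r)=\tfrac13$, $PT(r,r')=\tfrac23$, $PT(r',r')=1$; set $PI(q)=PI(r)=\tfrac12$ and let atomic propositions $p_2,p_3$ hold exactly at $q$, respectively $r$. Then in $\I(M)$ the global probability terms are run-independent and satisfy $[\prob(p_2(t))]=\tfrac12\,2^{-t}$ and $[\prob(p_3(t))]=\tfrac12\,3^{-t}$ for every value $t$ of a time variable. Hence in any mixed-time polynomial atomic probability formula $\exists t_1\ldots t_n\,(f(\prob(p_a(t_b)),\ldots)=0)$ built from $p_2,p_3$, substituting these closed forms and clearing denominators turns the body into $g(2^{t_1},3^{t_1},\ldots,2^{t_n},3^{t_n})=0$ for an integer polynomial $g$ effectively computable from $f$ (and, since all the $2^{t_j},3^{t_j}$ are positive, no spurious zeros are introduced); conversely every integer polynomial relation of this shape is realised by some $f$, because multiplying a Laurent polynomial in the $2^{-t_j},3^{-t_j}$ by a high enough monomial yields a genuine polynomial with the same zero set. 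So model checking these formulas in the fixed $M$ is \emph{exactly} the question whether $g(2^{t_1},3^{t_1},\ldots,2^{t_n},3^{t_n})=0$ has a solution in $\Nat^n$.

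The core of the proof — and the step I expect to be the main obstacle — is showing this last problem is undecidable, i.e. arithmetising a given Diophantine system using only the ``snapshot'' quantities $2^{t_j}$ and $3^{t_j}$, given that the time values $t_j$ themselves are \emph{never} available as polynomial quantities (a finite chain can only produce bounded functions of time). Non-negative integer linear relations among exponents are free: $2^{t_i}\cdot 2^{t_j}=2^{t_i+t_j}$, scaling gives $2^{k t_i}$, and forcing the base-$2$ and base-$3$ exponents of the $j$-th pair to agree pins down a common $t_j$ in two representations, so $t_i<t_j$ etc.\ are expressible. The genuine difficulty is multiplication of the original variables and, more generally, contracting the bounded universal quantifiers in the arithmetisation of an r.e.\ set without the variables in hand. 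I would attack this with a base-$B$ digit / Chinese-remainder ($\beta$-function style) coding: having $2^{t_j}$ for a \emph{variable} exponent lets one express ``the $t_j$-th base-$2^m$ digit of $N$ equals $d$'' by a polynomial relation $N=a\,(2^m)^{t_j+1}+d\,(2^m)^{t_j}+b$ with $d<2^m$ and $b<(2^m)^{t_j}$ (using $2^{t_j+1}=2\cdot 2^{t_j}$, obtained by a fresh time variable), and the base-$3$ gadget furnishes a coprime modulus; iterating the standard sequence-coding argument over these digits recovers multiplication and hence all of exponential Diophantine arithmetic. Making this encoding fit within the rigid format ``single polynomial equation under one existential block'' is the technical heart.

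Finally I would check faithfulness of the reduction $D\mapsto\psi_D$. Because $\psi_D$ is a sentence and the probability terms are run-independent, $M\models\psi_D$ is literally the statement that the encoded system has a solution, so the only thing to verify is that the encoding admits no \emph{spurious} solutions: that allowing distinct time variables to take equal values does no harm, that the degenerate cases $t_j=0$ (where $2^{t_j}=1$) behave correctly, and — since the atom is an equality $f(\cdots)=0$ rather than an inequality — that no unintended cancellation among monomials of $f$ creates a false zero; each is handled by padding $\psi_D$ with a few guard equations. The bound of four states is then exactly what two decay gadgets of two states apiece require.
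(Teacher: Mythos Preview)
Your reduction target---deciding, for an integer polynomial $g$, whether $g(2^{t_1},3^{t_1},\ldots,2^{t_n},3^{t_n})=0$ has a solution in $\Nat^n$---is not a known undecidable problem, and your sketch for establishing it has a real gap. The digit-extraction relation $N=a\,(2^m)^{t_j+1}+d\,(2^m)^{t_j}+b$ with $b<(2^m)^{t_j}$ and $d<2^m$ presupposes auxiliary quantities $N,a,b,d$ ranging over \emph{all} of $\Nat$. In your setup every available quantity is a rational polynomial in the values $2^{-t_i},3^{-t_i}$; after clearing denominators you obtain integer polynomials in $2^{t_i},3^{t_i}$, but never a free integer variable. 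The $\beta$-function and Matiyasevich-style codings you invoke all rely on such free variables---that is exactly what makes exponential Diophantine solvability undecidable: one has \emph{both} ordinary integer unknowns and an exponential function on them. With only exponentials of the time variables you get addition on the exponents $t_j$ for free (via $2^{t_i}\cdot 2^{t_j}=2^{t_k}$), but recovering \emph{multiplication} on the $t_j$ is precisely the missing step, and nothing in your sketch closes it. Your own flag that this is ``the main obstacle'' is accurate; as written the argument does not go through.

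The paper avoids this difficulty entirely by a different choice of $4$-state chain: the transition matrix is arranged so that $\tfrac12$ is a \emph{double} eigenvalue with a nontrivial Jordan block. Matrix powers then contain the generalized-eigenvector contribution $n\cdot(\tfrac12)^n$ in addition to $(\tfrac12)^n$, and suitable rational linear combinations of the entries $\prob(p_j(t))$ realise both $x(t)=(\tfrac12)^{t}$ and $w(t)=t\cdot(\tfrac12)^{t}$. For a Diophantine polynomial $p(n_1,\ldots,n_k)$ of degree $d_j$ in $n_j$, the product $w(t_1)^{i_1}\cdots w(t_k)^{i_k}\,x(t_1)^{d_1-i_1}\cdots x(t_k)^{d_k-i_k}$ equals $(\tfrac12)^{d_1t_1+\cdots+d_kt_k}\cdot n_1^{i_1}\cdots n_k^{i_k}$; the exponential prefactor is identical across all monomials, so the resulting probability polynomial vanishes iff $p$ does, and the reduction from Hilbert's Tenth Problem is immediate. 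Your chain, being a direct sum of two $2\times 2$ triangular blocks each with distinct eigenvalues, is diagonalisable and therefore cannot produce any $n\cdot\lambda^n$ term---that term is exactly the ingredient you are missing.
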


\begin{proof} 
By reduction from Hilbert's tenth problem, i.e., the problem 
of determining whether a polynomial with integer coefficients 
has solutions in the natural numbers. This was shown to be 
undecidable by Matiyasevich \cite{Matiyasevich}. 

We show that we can find a stochastic matrix $M$ 
and a stochastic vector $\ve{f}$ 
such that for each function $f(t) = t\cdot \lambda^{t}$ and $f(t) = \lambda^{t}$ with 
$\lambda = 1/2$, 
there 
is a rational vector $\ve{g}$
such that  $f(t) = \ve{f}^T M^t\ve{g}$. Given a polynomial $p(n_1, \ldots, n_k)$, we can construct a variant polynomial 
$q'$ over a larger set of variables, such that an appropriate 
substitution of such functions $t_i\cdot \lambda^{t_i}$ and $\lambda^{t_i}$, 
for the $n_i$ and the additional variables yields an 
expression 
$\lambda^{d_1 t_1 + \ldots + d_k t_k}\cdot p(t_1, \ldots, t_k)$, where the $d_i$ are constants. 
This has a zero in the $t_1\ldots t_n$ iff  $p(x_1, \ldots, x_n)$ has a zero. 
It follows 
 that  mixed-time polynomial  atomic probability formulas can express Hilbert's tenth problem.

 Consider the following matrix 
\beq\label{eq:matrix_hilbert}
A = 
\begin{pmatrix}
\frac{1}{2} & 0 & 0 & \frac{1}{2} \\
 \frac{1}{4} & \frac{1}{2} & 0 & \frac{1}{4} \\
\frac{1}{3} & \frac{1}{3} & \frac{1}{3} & 0 \\
0 & 0 & 0 & 1
\end{pmatrix}
\eeq
It is easily checked that 
$A$ is stochastic matrix. Expanding by the last  row, the characteristic (and minimal) polynomial of $A$ is given by
\[ f(x) = (x-1)(x-\frac{1}{2})^2(x-\frac{1}{3}) \]
Using Perron's equation \cite{Romanovsky}, which provides a way to express the powers of a matrix using a polynomial,  
we have 
\beq\label{eq:ex_hilbert1}
\begin{split}
A^n = & ~6 (A -\frac{1}{2}I)^2(A-\frac{1}{3} I) ~- ~12 (\frac{1}{2})^n (A- I) (A-\frac{1}{3} I)~ - \\
&12 [(\frac{1}{2})^{n-1} n - 4(\frac{1}{2})^n](A- I)(A-\frac{1}{2}I)(A-\frac{1}{3} I) ~-~54 (\frac{1}{3})^n (A-I)(A-\frac{1}{2}I)^2 \\
\end{split}
\eeq
(The reader may verify the correctness of this equation using the fact (see e.g., \cite{Gantmacher}, Ch. V) that if, for a polynomial $p(x)$ and
a matrix $B$, we have, for each eigenvalue $\lambda$ of $B$, of multiplicity $k$, we have 
$p^{(j)}(\lambda) = 0$ for all $j= 0 \ldots k-1$ (where  $p^{(j)}$ is the $j$-th derivative), 
then $p(B) = 0$.)

Define the vectors 
$$
\ve{f} = \begin{pmatrix}
1/4 \\ 1/4 \\ 1/4 \\ 1/4
\end{pmatrix} \quad 
\ve{g} = \begin{pmatrix}8/3 \\ 8/3 \\  -8 \\ 0 \end{pmatrix} 
\quad 
\ve{g}' = 2 (A-\frac{1}{2} I) \ve{g} = \begin{pmatrix} 0 \\ 4/3 \\  8/3 \\ 0 \end{pmatrix} 
\quad \ve{g}'' = 2 \ve{g}' + \ve{g} = \begin{pmatrix} 8/3 \\ 0 \\  - 8/3 \\ 0 \end{pmatrix} 
$$
Then $\ve{g}'$ is an eigenvector  of $A$ with eigenvalue $1/2$, and
$(A-\frac{1}{2})^2\ve{g} = 0$. Setting $k(x) = (x-1)(x- \frac{1}{3})$, 
we also have $\ve{f}^T k(A) \ve{g} = 0$
and $\ve{f}^T k(A) \ve{g}' = -\frac{1}{12}$. Using equation~\ref{eq:ex_hilbert1}, 
we obtain
\beq \label{eq:ex_hilbert2}
\ve{f}^T A^n \ve{g} = (\frac{1}{2})^n[n-2] \quad \quad \ve{f}^T A^n \ve{g}' = (\frac{1}{2})^n \quad\quad  \ve{f}^T A^n \ve{g}'' = (\frac{1}{2})^n \cdot n 
\eeq
for all $n\geq 0$. 

We construct a PO-DTMC $M$ for a single agent (called 1), from $\ve{f}$ and $A$, by taking states $S= \{s_1,s_2,s_3,s_4\}$ corresponding
to the four dimensions, the initial distribution $PI$ to be given by $\ve{f}$, the transition probabilities $PT$ to be given by $A$, 
and the observation function defined by $\obs_1(s) = \bot$ for all $s\in S$. (That is, the agent is blind.) 
We define $\pi$ to interpret the propositions $p_1, \ldots , p_4$ by $p_i \in \pi(s_j)$ iff $i=j$. That is, each $p_i$ holds at state $s_i$ only. 
Let $\I = \I(M)$. 

Corresponding to the vector $\ve{g}'$, we define the probability term 
$$x(t) = \frac{4}{3} \prob_{1,t}( p_2) +  \frac{8}{3} \prob_{1,t}( p_3) $$ 
and corresponding to the vector  $\ve{g}''$, 
we define the probability term 
$$w(t) = \frac{8}{3} \prob_{1,t}( p_1) - \frac{8}{3} \prob_{1,t}( p_3) ~.$$
Then for all runs $r$ of $\I$ and temporal assignments $\timeassgt$ 
such that $\timeassgt(t_i) = n_i$, we have 
$$[ x(t_i) ]_{\I,\timeassgt,r} = (\frac{1}{2})^{n_i} \quad\text{and} \quad  [ w(t_i) ]_{\I,\timeassgt,r} = (\frac{1}{2})^{n_i} \cdot n_i~.$$ 

Now,  any diophantine equation 
\[p(n_1,n_2,\dotsc, n_k) = \sum_{i_1,\dotsc, i_k}  a_{i_1,\dotsc, i_k}n_1^{i_1}n_2^{i_2} \dotsc n_k^{i_k} = 0\]
can be expressed as follows. First let $d_i$ be the degree of $P$ in the variable $n_i$. Corresponding to the monomial $m_{i_1, \dotsc, i_k} = n_1^{i_1}n_2^{i_2} \dotsc n_k^{i_k}$, define the 
probability term  
$$z_{i_1, \dotsc, i_k} = w(t_1)^{i_1} \dotsc  w(t_k)^{i_k} x(t_1)^{d_1-i_1} \dotsc  x(t_k)^{d_k-i_k}~.$$ 
Using these terms, we obtain the probability term 
$$Z(t_1, \dotsc, t_k) = \sum_{i_1,\dotsc, i_k}  a_{i_1,\dotsc, i_k}z_{i_1, \dotsc, i_k}~.$$ 
Now,   for the temporal assignment $\timeassgt$ with $\timeassgt(t_i) = n_i$ for $i = 1 \ldots k$,
 we have 
\[  [Z(t_1, \dotsc, t_k)]_{\I,\timeassgt,r}   =  \left (\frac{1}{2}\right)^{n_1d_1+\dotsb +n_kd_k} \cdot \sum_{i_1,\dotsc, i_k}  a_{i_1,\dotsc, i_k}n_1^{i_1}n_2^{i_2} \dotsc n_k^{i_k} \]
Hence,  we have 
$[Z(t_1, \dotsc, t_k)]_{\I, \timeassgt,r} = 0 $ iff $P(n_1,n_2,\dotsc, n_k) =0$. 
It follows that $\I \models \exists t_1 \ldots t_k( Z(t_1, \dotsc, t_k) = 0)$ iff $P$ has a solution in the natural numbers. 
\end{proof} 

We remark that the possibility of encoding Hilbert's tenth problem is not 
immediate from the fact that we are dealing with polynomials, since
our polynomials are over \emph{rational}  values generated in a very specific way
from Markov chains, rather than arbitrary integers. Indeed, there are decidable logics containing polynomials, 
such as the theory of real closed fields \cite{Tar1}. 

As noted in Section~\ref{sec:discussion}, formulas 
(allowed by Theorem~\ref{thm:brs}) of the form $\prob(\phi(t_1, \ldots, t_n)) \bowtie c$ can be written as a polynomial of probability expressions, so it is natural 
to ask whether such formulas also suffice  to make the logic undecidable. 
This does not seem to be the case:  the polynomials involved have
only positive coefficients. Since Hilbert's tenth problem is trivially decidable for 
polynomials with only positive coefficients, our proof does not apply to this case.

\section{ Conclusion} \label{sec:concl} 

Our results have by no means resolved Skolem's problem, which remains an apparent barrier
to resolving the gap between the decidability results of \cite{BRS06} and the 
undecidability results of the present paper. 

However, in work to be presented elsewhere, 
we show that the results of \cite{BRS06} can be extended both by reducing the set $H_\phi$ of cases that 
needs to be excluded to obtain decidability, as well as enhancing the 
expressiveness  to cover epistemic probabilistic terms of the form $\prob_i(\phi)$, interpreted with 
respect to the clock semantics. \\[5pt]

\noindent
{\bf Acknowledgment:} The authors thank Igor Shparlinski and Min Sah for illuminating discussions on Skolem's problem, and Xiaowei Huang 
for helpful comments on a draft of the paper.

\bibliographystyle{alpha} 
\bibliography{pxn}

\end{document}